\newtheorem{theorem}{Theorem}
\newtheorem{lemma}{Lemma}
\newtheorem{definition}{Definition}
\newtheorem{remark}{Remark}
\newtheorem{proposition}{Proposition}
\newtheorem{corollary}{Corollary}
\newtheorem{example}{Example}
\def\BibTeX{{\rm B\kern-.05em{\sc i\kern-.025em b}\kern-.08em
    T\kern-.1667em\lower.7ex\hbox{E}\kern-.125emX}}
\begin{document}

\title{On the Capacity of Noisy Frequency-based Channels
\thanks{The research was supported by United States – Israel Binational Science Foundation (NSF-BSF), grant no. 2024763.}}

\author{\IEEEauthorblockN{Yuval Gerzon}
\IEEEauthorblockA{\textit{Technion}\\
gerzon.yuval@campus.technion.ac.il}
\and

\IEEEauthorblockN{Ilan Shomorony}
\IEEEauthorblockA{\textit{UIUC}\\
ilans@illinois.edu}

\and

\IEEEauthorblockN{Nir Weinberger}
\IEEEauthorblockA{\textit{Technion}\\
nirwein@technion.ac.il}
}

\maketitle

\begin{abstract}
We investigate the capacity of noisy frequency-based channels, motivated by DNA data storage in the short-molecule regime, where information is encoded in the frequency of items types rather than their order. The channel output is a histogram formed by random sampling of items, followed by noisy item identification. While the capacity of the noiseless frequency-based channel has been previously addressed,  the effect of identification noise has not been fully characterized. We present a converse bound on the channel capacity that follows from stochastic degradation and the data processing inequality. We then establish an achievable bound, which is based on a Poissonization of the multinomial sampling process, and an analysis of the resulting vector Poisson channel with inter-symbol interference. This analysis refines concentration inequalities for the information density used in Feinstein bound, and explicitly characterizes an additive loss in the mutual information due to identification noise. We apply our results to a  DNA storage channel in the short-molecule regime, and quantify the resulting loss in the scaling of the total number of reliably stored bits.
\end{abstract}

\section{Introduction}
\subsection{Background and Motivation}

DNA storage \cite{church2012next, goldman2013towards, grass2015robust} presents a revolutionary approach to data archiving, and offers unparalleled data density and longevity preserving data for millennia \cite{erlich2017dna, organick2018random, lenz2019anchor, sima2021coding}. While this make DNA storage a prime candidate for archival storage, unlike electronic media, DNA storage lacks inherent addressing, and  data is stored in a pool of molecules. The data retrieval then involves random sampling and sequencing \cite{yazdi2015rewritable, kiah2016codes}. 
In \cite{shomorony2022information}, an information-theoretic model for such channels is presented, called the Noisy Shuffling-Sampling Channel, in which a codeword is represented by $K$ molecules of length $L$ each, stored out of order. As synthesis and sequencing constraints often severely limit the  molecule's length, the molecules are assumed to be of short length, which is captured by the scaling $L=\beta\log K$, where $\beta$ is a constant parameter. The storage rate of such systems is given by the total number of stored bits divided by the total number of nucleotides in the molecules, i.e., $KL$. The result is a fundamental dichotomy regarding the capacity \cite{shomorony2022information}: In the \textit{long-molecule regime}, where $\beta >\frac{1}{\log|\mathcal{A}|}$, capacity is achieved by indexing: reserving a portion of the strand to store an ID, effectively restoring order \cite{shomorony2022information}. In the \textit{short-molecule regime} where $L \leq\frac{\log K}{\log|\mathcal{A}|}$, the overhead of indexing is prohibitive, and the asymptotic capacity vanishes. Formally, the number of maximally reliable stored bits is $o(KL)$. 

Nonetheless, due to its remarkable physical density, the total number of reliably stored bits can still be huge even if the capacity vanishes. A conjecture on the scaling of this  number of bits was put forth in \cite[Sec. 7.3]{shomorony2022information}. In this short-molecule regime, the number of different molecule types is less than $K$, and so each codeword must contain more than a single instance of at least one molecule. In other words, the data is encoded to the frequency of the different molecules in the codeword. This motivated us to propose a general \emph{frequency-based channel} model \cite{gerzon2025capacity}, in which the decoder observes only the \emph{composition} (or histogram) of the stored items (which are the DNA molecules in DNA storage), rather than their ordered sequence. In \cite{gerzon2025capacity} the capacity of this channel was characterized in case that the decoder can randomly sample from the pool of items (molecules in DNA storage), and identify the sampled item without noise. As a corollary, we applied this result to DNA storage, and resolved the conjecture of \cite[Sec. 7.3]{shomorony2022information}, though only for  the regime $\beta \in \left(\frac{1}{2|\log\mathcal{A}|}, \frac{1}{\log|\mathcal{A}|}\right)$. In \cite{tamir2025DNA}, the proof  was completed for the entire regime $\beta \in \left(0, \frac{1}{\log|\mathcal{A}|}\right)$, though using a rather different type of analysis. A related model, based on Poisson sampling, was considered in \cite{bello2024lattice}, and the case in which the frequency-based channel has \textit{unlimited input resolution} was considered in \cite{tamir2025achievable, 11127184},  deriving achievable rates and error probability bounds.

All previous works have assumed that the only randomness of the channel is the result of random sampling of items from the pool, and the identification of the sampled item at the decoder side is noiseless. This models the case of noiseless sequencing in a DNA storage system. However, practical sequencers (e.g., Illumina, Nanopore), as well as other realistic channel models,  introduce significant noise, e.g., \cite{heckel2019characterization}. In the context of frequency-based channels, these physical errors manifest uniquely. Synthesis and sequencing imperfections, such as substitution errors \cite{organick2018random}, effectively ``transport'' mass between histogram bins, causing an item of type $i$ to be misidentified as type $j$, while some noise models, e.g., deletion or erasure, will lead to new counts at the output.

\subsection{Contribution}
In this work, we address the capacity of frequency-based channels in the noisy case. Our main contribution is a characterization of capacity bounds for the noisy frequency-based channels. Specifically, we consider a noise model described as a Markov kernel, defining the probability of identifying at the decoder a symbol of one type whenever the sampled symbol is of different type.  This noise leads to a sort of an Inter-Symbol Interference (ISI) in the frequency domain: the observed count of type $j$ depends on a linear combination of input counts, and under certain noise models, such as Erasure, the type $j$ is not necessarily a member of the original types. We explain why the converse bound in the noiseless case \cite{gerzon2025capacity} is intact in the noisy case via a data-processing argument, and put the main technical effort to derive a new achievability bound. 

We base our analysis technique on \cite{gerzon2025capacity}, as the methods of \cite{tamir2025achievable, 11127184} are more challenging to extend to the noisy case.  The main challenge regarding this adaptation is that various parts of the proof strongly rely on the fact that the number of samples of item of a given type only depends on the number of similar items in the input codeword (pool), and not on the number of other items. This clearly breaks when identification noise is present. As one example, an analysis of the concentration of the information density in the  noiseless case \cite{gerzon2025capacity} based on the simple Hoeffding inequality is replaced for the noisy case with Talagrand inequality, which necessitates establishing a suitable Lipschitz  property for entropy.

Specifically, our lower bound (Theorem \ref{thm:main_noisy_achievability}) establishes an upper bound on the capacity additive degradation due to identification noise, under proper conditions on the scaling of the number of different possible items and total number of items in the codeword. As a corollary, we obtain such a result for the scaling of total number of reliably stored bits in DNA storage channel. Due to the use of more involved arguments, our results apply only for the regime $\beta \in \left(\frac{2}{3|\log\mathcal{A}|}, \frac{1}{|\log\mathcal{A}|}\right)$, which is a minor degradation from \cite{gerzon2025capacity}. Nonetheless, it can be shown that if the size of the output alphabet of the channel is \textit{less} than then input alphabet, different arguments can extend this regime back to the noiseless one of $\beta \in \left(\frac{1}{2|\log\mathcal{A}|}, \frac{1}{|\log\mathcal{A}|}\right)$. For conciseness, we do not formally state this in this paper.

\section{System Model}

For an integer $k \in \mathbb{N}_{+}$, let $[k]:=\{1, 2, \dots, k\}$. For $a, b \in \mathbb{R}$, let $a \wedge b := \min\{a, b\}$. Logarithms and exponents are taken to the natural base, and $h_{b}(\cdot)$ is the binary entropy function. We use standard notation for information-theoretic measures.

We denote the number of distinct input object types by $n$. The input codeword is represented by a count vector $x^n = (x_1, \dots, x_n) \in \mathbb{N}^n$, where $x_i$ is the number of objects of the $i$th type in the pool. The input is subject to a total object constraint $\sum_{i=1}^n x_i = n g_n$, where $g_n$ is the normalized average abundance. We assume $g_n \to \infty$ as $n \to \infty$.

To model noisy identification, we distinguish between the input alphabet size $n$ and the output alphabet size, denoted by $m_n\equiv m$. The channel is characterized by a row-stochastic transition matrix $W^{(n)}\equiv W \in \mathbb{R}^{n \times m}$, where $W_{i,j}$ represents the probability that an input object of type $i\in[n]$ is identified as type $j \in [m]$. The reading process consists of drawing $nr_n$ samples, and the resulting output count vector is denoted by $y^m \in \mathbb{N}^m$.

For reading the codeword, the decoder samples $n r_n$ items from the pool, and the item is identified via the Markov kernel $W^{(n)}$. For an input $x^n$, the output vector $Y^m$ thus follows a Multinomial distribution parameterized by  probabilities $\mathbf{p}=(p_1,\ldots,p_m)$ as
\begin{align}
    Y^{m} \sim \text{Multinomial}\left(nr_{n}, \mathbf{p}\right), \quad p_j = \frac{1}{ng_n}\sum_{i=1}^n x_i W^{(n)}_{i,j}.
\end{align}

The size of the largest code for $n$ object types, normalized total count of input objects $g_{n}$, normalized number of sampled objects $r_{n}$, under a given error probability $\epsilon_{n}\in(0,1)$, and noisy channel $W^{(n)}$ is denoted by $M^{*}_n \equiv M^{*}_n(\epsilon_{n},g_{n},r_{n},W^{(n)})$. The capacity is the maximal value of $\lim_{n\to\infty}M^{*}_n$ for vanishing error probability $\epsilon_n=o(1)$.

\section{Capacity Bounds}

We begin by outlining a  converse bound: 

\begin{theorem}[Converse]
\label{thm:converse}
For any  sequence of Markov kernel channels $\{W^{(n)}\}$, the  noisy frequency-based channel satisfies
\begin{align}
    \frac{1}{n}\log M^*(g_n, r_n, W) \leq \frac{1}{2} \log (r_n \wedge e g_n) + o_n(1).
\end{align}
\end{theorem}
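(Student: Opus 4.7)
The plan is to reduce the noisy case to the noiseless case via a stochastic degradation argument, exactly as the authors suggest. The key observation is that the noisy frequency-based channel can be realized as the noiseless frequency-based channel followed by an additional per-sample identification noise stage.

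Concretely, I would first introduce an auxiliary noiseless sampled histogram $\tilde{Y}^{n} \sim \mathrm{Multinomial}(nr_{n}, x^{n}/(ng_{n}))$ and verify that the noisy output $Y^{m}$ can be generated from $\tilde{Y}^{n}$ by drawing, for each type $i \in [n]$, an independent sequence of $\tilde{Y}_{i}$ output labels from the Markov kernel row $W^{(n)}_{i,\cdot}$ and aggregating them into a count vector in $\mathbb{N}^{m}$. A short marginal computation recovers $Y^{m} \sim \mathrm{Multinomial}(nr_{n}, \mathbf{p})$ with the $\mathbf{p}$ specified in the system model, so $(X^{n}, \tilde{Y}^{n}, Y^{m})$ forms a Markov chain whose end-to-end joint law coincides with that of the original noisy channel.

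Next I would transfer this decomposition to the code-size level. Given any $(M, \epsilon_{n})$-code for the noisy channel with decoder $\phi_{\text{noisy}}$, one obtains a code of the same size and error probability for the noiseless channel by letting the noiseless decoder observe $\tilde{y}^{n}$, internally simulate the kernel $W^{(n)}$ on the sampled items to produce a synthetic $Y^{m}$ with the correct conditional law, and then apply $\phi_{\text{noisy}}$. By construction the error probabilities match exactly, hence $M^{*}_{n}(\epsilon_{n}, g_{n}, r_{n}, W^{(n)}) \leq M^{*}_{n}(\epsilon_{n}, g_{n}, r_{n}, I)$, where $I$ denotes the identity kernel. Plugging in the noiseless converse of \cite{gerzon2025capacity} then yields the desired $\frac{1}{2} \log(r_{n} \wedge e g_{n}) + o_{n}(1)$ upper bound.

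The only mild subtleties are bookkeeping: $m \neq n$ in general, so the auxiliary alphabet has size $n$ while the actual output alphabet has size $m$, but the Markov chain structure is unaffected; and allowing the noiseless decoder to use internal randomness is standard and does not alter the maximum achievable code size in the relevant asymptotic sense. I would not expect any real technical obstacle here, which is consistent with the authors placing the bulk of their technical effort on the matching achievability bound (Theorem \ref{thm:main_noisy_achievability}) rather than on this converse.
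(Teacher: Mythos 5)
Your proposal is correct and follows essentially the same route as the paper: both reduce the noisy channel to the noiseless one via stochastic degradation (the noisy output is the noiseless multinomial histogram followed by per-sample relabeling through $W^{(n)}$) and then invoke the noiseless converse of \cite{gerzon2025capacity}. The only difference is cosmetic --- the paper phrases the reduction as a data-processing inequality on $I(X^n;Y^m)$, whereas you carry out the operational code-level simulation argument, which is if anything the more careful way to transfer the bound to $M^*$.
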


\begin{proof}
As mentioned in \cite[Thm. 1]{gerzon2025capacity}, the bound holds for the noisy case.
The proof relies on stochastic degradation. Consider the noiseless channel where $W^{(n)}=I_n$. The output of the noisy channel $Y^m$ can be simulated by first passing $X^n$ through the noiseless channel to obtain $Y_{ideal}^n$, and then applying the noise matrix $W^{(n)}$ to the samples.
Formally, the noisy channel is a stochastic degradation of the noiseless channel. By the Data Processing Inequality (DPI), the mutual information $I(X^n; Y^m) \leq I(X^n; Y_{ideal}^n)$.
The upper bound for the noiseless case derived in \cite{gerzon2025capacity} using the duality of capacity \cite{lapidoth2003asymptotic, lapidoth2008capacity} applies directly. The term $\frac{1}{2}\log(eg_n)$ arises from the counting constraint ("stars and bars") on the integer inputs, while $\frac{1}{2}\log r_n$ arises from the sampling noise.
\end{proof}

Our main result in this paper is a lower bound on the capacity.

Let $\Psi(\mu) := (\mu+1) h_b(\frac{1}{\mu+1})$ be the maximal entropy of non-negative integer random variable (RV) with a mean upper bounded by $\mu$. 
\begin{definition}[Well-Conditioned Transition Matrix]
\label{def:well_conditioned}
A transition matrix $W$ is well-conditioned with parameters $\tau_n > 0, \eta >0$ and a constant $C_{\text{max}}$ if for all $j$:
\begin{align}
    \kappa_j(W) = \frac{\max_{i} W_{i,j}}{\min_{i:W_{i,j}>0} W_{i,j}} \leq \frac{1}{\tau_n}\\
    n^{-\eta} \le \sum_{i=1}^n W_{ij} \le C_{\text{max}}
\end{align}
\end{definition}
\begin{theorem}[Achievability]
\label{thm:main_noisy_achievability}
Assume $g_n=\omega_n(1)$, $n=\Omega(g_{n}^{1+\zeta})$ for some $\zeta>0$, $r_n = \Theta(g_n)$, and $W$ is a \emph{well-conditioned} matrix (Def. \ref{def:well_conditioned}), and satisfies $-\log \tau_n = o(\sqrt{n})$ then
\begin{align}
\frac{1}{n}\log M^* \geq\;&
\frac{1}{2}\log r_n - \Psi\!\left(\frac{r_n}{g_n}\right)
\notag\\
&\quad + \frac{1}{2n}\log\det(W^{(n)}(W^{(n)})^\top) - o_n(1).
\end{align}
\end{theorem}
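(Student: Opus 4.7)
The plan is to lower-bound $\frac{1}{n}\log M^*$ via Feinstein's bound applied to a product geometric input distribution, combined with a Poissonization of the multinomial sampling, so that the resulting vector Poisson channel with ISI admits an analysis that cleanly exposes the three terms in the claim.

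First, I would draw $X_1,\dots,X_n$ i.i.d.\ from a geometric distribution with mean $g_n$ (so $H(X_i)=\Psi(g_n)$), and handle the total-count constraint $\sum_i X_i=ng_n$ by restricting to a typical set of sub-exponentially small complement, at vanishing cost to the Feinstein bound. Poissonizing the sampling replaces $Y^m\sim\mathrm{Mult}(nr_n,\mathbf p)$ with independent $\tilde Y_j\sim\mathrm{Poisson}(\lambda_j)$ where $\lambda_j=\frac{r_n}{g_n}\sum_i X_iW_{ij}$, at only $o(1)$ total-variation cost under $r_n=\Theta(g_n)$. Since the conditional law then factorizes, $H(\tilde Y^m\mid X^n)$ becomes a sum of per-coordinate Poisson entropies that can be evaluated in closed form using the well-conditioning of $W$.

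Second, to compute $I(X^n;\tilde Y^m)$ I would use the Gaussian approximation to the entropy of a large-mean Poisson vector, valid in the regime $r_n=\Theta(g_n)$ together with Definition~\ref{def:well_conditioned}. Under this approximation, $\tilde Y^m$ behaves approximately as $\mathcal N(\lambda,\mathrm{diag}(\lambda))$ with $\lambda=\frac{r_n}{g_n}X^nW$, so that $H(\tilde Y^m)$ picks up a Gaussian-vector contribution whose linear dependence on $X^n$ through the rectangular matrix $W$ introduces, via the change-of-variables Jacobian of $x\mapsto xW$, exactly the term $\frac{1}{2}\log\det(WW^\top)$. Combining this with the conditional Poisson entropy and simplifying using $\Psi$ produces the three claimed terms $\frac{1}{2}\log r_n$, $-\Psi(r_n/g_n)$, and $\frac{1}{2n}\log\det(WW^\top)$, up to $o(1)$ corrections that absorb the Poissonization and Gaussian-approximation errors; here the hypothesis $n=\Omega(g_n^{1+\zeta})$ is what ensures these corrections survive normalization by $1/n$.

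Third, Feinstein's lemma requires concentration of the information density $i(X^n;\tilde Y^m)=\log P(\tilde Y^m\mid X^n)-\log P(\tilde Y^m)$ at an inverse-polylog rate around its mean. In the noiseless case \cite{gerzon2025capacity} the density splits as a sum of $n$ i.i.d.\ terms and Hoeffding applies directly after truncation. Under ISI this decomposition breaks, so I would instead establish coordinate-wise Lipschitz bounds: a unit change in $X_i$ perturbs $\log P(\tilde Y_j\mid X^n)$ by at most $O(W_{ij}\log(1/\tau_n))$ for each $j$ with $W_{ij}>0$, and an analogous bound holds for changes in $\tilde Y_j$, with $1/\tau_n$ bounding the worst log-ratio of $W$'s entries. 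Talagrand's convex distance inequality for product measures then yields a sub-Gaussian tail whose variance proxy grows like $n(\log(1/\tau_n))^2$; under $-\log\tau_n=o(\sqrt n)$ this is $o(n^2)$, which is the precise budget needed for Feinstein to yield the claimed rate.

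The main obstacle, aligned with the paper's flag that Hoeffding must be replaced by Talagrand "via a suitable Lipschitz property for entropy," is bounding the coordinate gradients of the marginal log-likelihood $\log P(\tilde Y^m)$, which is a high-dimensional mixture over input configurations rather than a product term. Its partial derivatives in $\tilde Y_j$ and $X_i$ must be controlled uniformly in terms of the entries of $W$, and this is where the well-conditioning parameters $(\tau_n,\eta,C_{\max})$ of Definition~\ref{def:well_conditioned} enter quantitatively: the uniform lower bound $\min_{i:W_{ij}>0}W_{ij}\ge \tau_n\max_iW_{ij}$ prevents the mixture components from having widely disparate weights, which is exactly what is needed for the log-mixture to inherit Lipschitz gradients from the component log-likelihoods. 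Making this "Lipschitz-for-mixtures" step quantitative, and pushing the resulting tail bound through Feinstein uniformly over the typical input set, is where I expect the bulk of the technical work to lie.
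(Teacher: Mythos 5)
Your overall architecture matches the paper's: Feinstein bound over a constrained typical set, Poissonization, a lower bound on the output entropy that picks up $\frac{1}{2}\log\det(WW^\top)$ from the change of variables $x\mapsto xW$ on the input intensity (the paper does this via the Lapidoth--Moser differential-entropy lower bound rather than a Gaussian approximation, but the mechanism is the same), and Talagrand in place of Hoeffding for the concentration step. However, there are two concrete gaps in your concentration argument.

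First, Talagrand's inequality for product measures in the form you need (sub-Gaussian concentration of a Lipschitz function around its mean) requires the function to be \emph{convex}; for general Lipschitz functions one does not get the two-sided Gaussian tail. The information density, viewed through the conditional entropy $h(x^n)=\sum_j H_{\mathrm{Poiss}}\bigl(\tfrac{r_n}{g_n}\sum_i W_{ij}x_i\bigr)$, is not obviously convex or concave in $x^n$, and the paper must prove this: it uses the Poisson forward-difference identity to show $\frac{\mathrm{d}^2}{\mathrm{d}\lambda^2}H_{\mathrm{Poiss}}(\lambda)\le -e^{-\lambda}/\lambda<0$, so that $-h$ is convex and Talagrand applies. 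Your proposal never verifies convexity, and without it the Talagrand step does not go through as stated. Relatedly, the paper's concentration is genuinely two-stage: Bobkov--Ledoux for the Poisson fluctuations of $i(x^n;Z^m)$ given $X^n=x^n$ (Lipschitz in $z^m$), and then Talagrand over the random $X^n$ for the conditional entropy; collapsing these into a single joint Lipschitz argument over $(X^n,\tilde Y^m)$ is not something Talagrand for product measures directly supports, since the pair is not a product measure.

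Second, you have misplaced where the hypothesis $-\log\tau_n=o(\sqrt n)$ does its work, and you omit the input truncation entirely. In the paper, $\tau_n$ enters the Bobkov--Ledoux Lipschitz semi-norm of the information density in $z$, which is $\log s_n-\log\tau_n$ where $s_n$ is the imposed support bound on each $X_i$; the Talagrand variance proxy is instead of order $s_n^2\, n\,(r_n/g_n)^2\log^2 n$ and is controlled by requiring $s_n=o(\sqrt n)$, using the column-sum lower bound $\sum_i W_{ij}\ge n^{-\eta}$ to bound $|H_{\mathrm{Poiss}}'(\lambda_j)|\le\eta\log n$. Your proposed variance proxy $n(\log(1/\tau_n))^2$ does not account for the factor $s_n^2$, which is precisely the quantity that forces the restriction to $\beta\in\bigl(\tfrac{2}{3\log|\mathcal{A}|},\tfrac{1}{\log|\mathcal{A}|}\bigr)$ in the DNA application; without tracking $s_n$ the bound you would obtain is too optimistic and the regime restriction would be invisible.
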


While the converse bound remained unchanged from the noiseless case, it may still be useful in an assessment of the tightness of the achievability bound. The achievability bound is altered by an additive term  $\frac{1}{2n}\log \det(W^{(n)}(W^{(n)})^\top)$. Thus, if the sequence of Markov kernels satisfies that this additive term is $o(n)$, the gap is guaranteed to converge to the same gap between the upper and lower bounds in the  noiseless case. For this class of channels, the difference between the noiseless case and the noisy case is asymptotically vanishing. 

Our result shows a robust bound for a large group of error stochastic models with varying error ranges, which shows the great value of the achievability bound for the noisy case of frequency-based channels. We denote the degradation in capacity by $\Delta$ for each channel.

\begin{example} \label{exa:general_substitution}

Consider a frequency-based channel with general symmetric-substitution noise.

The transition matrix $W \in \mathbb{R}^{n \times n}$ is given by:
\begin{align}
    W^{(n)} = \left(1 - p - \frac{p}{n-1}\right)I_n + \frac{p}{n-1}J_n,
\end{align}
where $J_n$ is the all-ones matrix. The eigenvalues of $W^{(n)}$ are $\lambda_1=1$ and $\lambda_{2,\dots,n} = 1 - \frac{np}{n-1}$. Consequently,
\begin{align}
    \det(W^{(n)}(W^{(n)})^\top) = \left(1 - \frac{np}{n-1}\right)^{2(n-1)},
\end{align}
and the capacity penalty term is therefore
\begin{align}
    \Delta_{\text{sub}}
    &= \frac{n-1}{n} \log \left( 1 - \frac{np}{n-1} \right).
\end{align}
Asymptotically, as $n \to \infty$, the term $\frac{np}{n-1} \to p$, and the capacity degradation converges to $\log(1-p)$.

\end{example}

\section{Application to DNA Storage}
In the DNA storage model, the items are molecules of length $L$ from an alphabet $\mathcal{A}$, typically
$\mathcal{A} = \{A,C,G,T\}$. This corresponds to $n = |\mathcal{A}|^L$, and the total number of items is $K=ng_n$, and the total number of sampled molecules is $N=nr_n$. 
For the noise matrix, we first define:
\begin{definition}[A single-nucleotide channel]
    \label{def:single_nucleotide_channel}
    The single-nucleotide channel $w$ is defined as noise kernel of a single DNA letter in a DNA strand.
\end{definition}

We assumed that the single-nucleotide channel operates independently on each of the $L$ nucleotides in the molecule (and independently across molecules). Thus, the transition matrix  is the Kronecker power of the single-nucleotide channel $w$, given by $W = w^{\otimes L}$.

In the DNA storage channel, it is more convenient to index the channel according to the total number of molecules $K$. Then, we let $M_{\text{DNA}}^{*}(L_{K},N_{K},\epsilon_{K},w)$ denote the maximal cardinality of a codebook with error probability $\epsilon_K$.

\begin{corollary}[to Theorem \ref{thm:main_noisy_achievability}]
\label{cor:DNA_thm_cor}
Assume that $L_{K}=\beta\log K$ where $\beta \in \left(\frac{2}{3\log|\mathcal{A}|}, \frac{1}{\log|\mathcal{A}|}\right)$ then Theorem \ref{thm:main_noisy_achievability} implies that 
\begin{align}
&\frac{M_{\text{DNA}}^{*}(L_{K},N_{K},\epsilon_{K},w)}{LK^{\beta\log|{\cal A}|}} \nonumber \\\geq&\frac{1}{2\beta}\frac{\log(N_{K})}{\log K}-\frac{\Psi\left(\frac{N_{K}}{K^{1-\beta\log|{\cal A}|}}\right)}{\beta\log K} + \Delta
+o\left(\frac{1}{\log K}\right) 
\end{align}
where 
\begin{equation}
\Delta := \frac{1}{2|\mathcal{A}|}
 \log\det{ww^\top}<0
\end{equation}
(using $\det(A^{\otimes L}) = (\det A)^{L \cdot a^{L-1}}$ for $A\in\mathbb{R}^{a\times a}$).
\end{corollary}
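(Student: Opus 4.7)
The plan is to specialize Theorem~\ref{thm:main_noisy_achievability} to the DNA-storage setting via the parameter identifications $n = |\mathcal{A}|^L = K^{\beta\log|\mathcal{A}|}$, $g_n = K/n = K^{1-\beta\log|\mathcal{A}|}$, $r_n = N_K/n$, $W^{(n)} = w^{\otimes L}$, and then divide the resulting per-type bound by $L = \beta\log K$ to turn it into a per-nucleotide rate. Under this division, $\tfrac{1}{2}\log r_n$ produces $\tfrac{1}{2\beta}\frac{\log N_K}{\log K}$ (after rewriting $\log r_n = \log N_K - \log n$), the $\Psi$ term maps to the middle term of the corollary once its argument is rewritten using $g_n = K^{1-\beta\log|\mathcal{A}|}$, the determinant term gives $\Delta$, and the residual $o_n(1)/L$ collapses to $o(1/\log K)$ because $L = \beta\log K$.

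The central algebraic step is the determinant penalty for a Kronecker-power kernel. By the mixed-product property of Kronecker products, $WW^\top = w^{\otimes L}(w^\top)^{\otimes L} = (ww^\top)^{\otimes L}$, and then the cited identity $\det(A^{\otimes L}) = (\det A)^{L a^{L-1}}$ applied to $A = ww^\top \in \mathbb{R}^{|\mathcal{A}|\times|\mathcal{A}|}$ gives
\begin{equation*}
\log\det(WW^\top) \;=\; L\,|\mathcal{A}|^{L-1}\log\det(ww^\top).
\end{equation*}
Dividing by $2n = 2|\mathcal{A}|^L$ reduces the theorem's penalty to $\tfrac{L}{2|\mathcal{A}|}\log\det(ww^\top)$, and an additional division by $L$ isolates exactly $\Delta = \tfrac{1}{2|\mathcal{A}|}\log\det(ww^\top)$. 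The sign $\Delta<0$ follows from $\det(ww^\top)\le 1$ for any row-stochastic $w$ (Hadamard's inequality), with strict inequality unless $w$ is a permutation.

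The main obstacle is verifying the well-conditioning hypothesis of Definition~\ref{def:well_conditioned} uniformly in $L$ for the tensor power $W = w^{\otimes L}$. Entries factorize coordinatewise, so $\kappa_j(W)\le \kappa(w)^L$ for every output column $j$, yielding $-\log\tau_n = O(L) = O(\log n)$, which is trivially $o(\sqrt{n})$ since $n = |\mathcal{A}|^L$ is exponential in $L$. The column-sum bounds factor similarly via $\sum_i W_{i,j} = \prod_{\ell=1}^L \sum_{i_\ell} w_{i_\ell, j_\ell}$, so they reduce to positivity and sub-unit boundedness of the single-nucleotide column sums, a mild non-degeneracy hypothesis on $w$. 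The remaining hypotheses are immediate: $g_n = K^{1-\beta\log|\mathcal{A}|}\to\infty$ since $\beta<1/\log|\mathcal{A}|$; $r_n = \Theta(g_n)$ follows from the standard scaling $N_K = \Theta(K)$; and $n = \Omega(g_n^{1+\zeta})$ for some $\zeta>0$ holds iff $\beta\log|\mathcal{A}| > 1/2$, which is strictly satisfied in the stated regime $\beta > 2/(3\log|\mathcal{A}|)$. The strictly tighter lower endpoint $2/(3\log|\mathcal{A}|)$ rather than $1/(2\log|\mathcal{A}|)$ is presumably inherited from a quantitative constraint on $\zeta$ embedded in the proof of Theorem~\ref{thm:main_noisy_achievability} (e.g.\ from the Talagrand/Lipschitz estimate for the information density), rather than from the reduction itself.
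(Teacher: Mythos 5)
Your overall reduction is the one the paper intends (the paper itself gives no explicit proof of the corollary beyond the parenthetical determinant identity): substitute $n=|\mathcal{A}|^L=K^{\beta\log|\mathcal{A}|}$, $g_n=K/n$, $r_n=N_K/n$, $W=w^{\otimes L}$ into Theorem~\ref{thm:main_noisy_achievability}, divide by $L$, and collapse the penalty via $WW^\top=(ww^\top)^{\otimes L}$ and $\det(A^{\otimes L})=(\det A)^{La^{L-1}}$. Your determinant computation reproduces exactly the appendix's Lemma~\ref{lem:det_tensor_general} and yields $\Delta=\frac{1}{2|\mathcal{A}|}\log\det(ww^\top)$ correctly, your verification of well-conditioning for tensor powers matches the appendix's example analyses, and your diagnosis that the endpoint $\frac{2}{3\log|\mathcal{A}|}$ comes from the $s_n=o(\sqrt{n})$ requirement in the Talagrand step (not from the theorem's stated hypotheses) is exactly the paper's own explanation.

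There is, however, one step you assert without checking, and under your own parameter identification it does not go through. With $r_n=N_K/n$ you have
\begin{equation*}
\frac{1}{2L}\log r_n=\frac{\log N_K-\log n}{2\beta\log K}=\frac{1}{2\beta}\frac{\log N_K}{\log K}-\frac{\log|\mathcal{A}|}{2},
\end{equation*}
so the rewriting $\log r_n=\log N_K-\log n$ leaves behind the constant $-\frac{\log|\mathcal{A}|}{2}$, which is \emph{not} $o(1/\log K)$ and cannot be absorbed into the error term. Likewise $\Psi(r_n/g_n)=\Psi\bigl(\tfrac{N_K/n}{K/n}\bigr)=\Psi(N_K/K)$, whereas the corollary's middle term is $\Psi\bigl(N_K/K^{1-\beta\log|\mathcal{A}|}\bigr)=\Psi(nN_K/K)$; these differ by a factor of $n$ inside $\Psi$. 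Both discrepancies disappear only if $N_K$ is read as the \emph{normalized} sample count $r_n$ rather than the total count $nr_n$ that the paper (and your first sentence) defines. You need to either adopt that reading explicitly, or carry the extra constant and the corrected $\Psi$ argument through the statement; as written, the claim that the first two terms ``map'' to the corollary's terms is an unproved (and, under your conventions, false) identity rather than a routine substitution.
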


\begin{example}[Erasure sequencing channel]
Assume that $|\mathcal{A}|=4$. 
Each nucleotide is erased with probability $\epsilon<1/|\mathcal{A}|$ (required for the Well-conditioned transition matrix assumption to holds), yielding alphabet size $q+1=5$.
The matrix is $w = [(1-\epsilon)I_4 \;|\; \underline{\epsilon}]$, and $\tau = 1$ hence $\log \tau = 0 = o\left(\sqrt{n}\right)$. 
Calculating
\begin{align}
    ww^\top = (1-\epsilon)^2 I_4 + \epsilon^2 \mathbf{1}_4 \mathbf{1}_4^\top,
\end{align}
the eigenvalues are $\lambda_1 = (1-\epsilon)^2 + 4\epsilon^2$ and $\lambda_{2,3,4} = (1-\epsilon)^2$.
The penalty term is thus
\begin{align}
    \Delta_{eras} = \frac{1}{8} \left[ \log((1-\epsilon)^2 + 4\epsilon^2) + 6\log(1-\epsilon) \right].
\end{align}
\end{example}

\begin{example}[Substitution sequencing channel]
For the  symmetric substitution channel with error $p$ it holds that 
\begin{align}
    w = \left(1-p - \frac{p}{3}\right)I_4 + \frac{p}{3}J_4.
\end{align}
where $J_4$ is a square all-ones matrix of size $4$. Let us denote $d_p := \left(\frac{1-p}{\frac{p}{3}}\right)\wedge \left(\frac{\frac{p}{3}}{1-p}\right)$, hence $\tau = \left(d_p\right) ^ L$, and so 
\begin{align}
    \log \tau = L \cdot \log \left(d_p\right) \approx \log n \cdot \log \left(d_p\right) = o\left(\sqrt{n}\right)
\end{align}
The eigenvalues are $1,1 - \frac{4p}{3}$ (multiplicity 3).
The penalty is:
\begin{align}
    \Delta_{sub} = \frac{3}{4} \log \left( 1 - \frac{4p}{3} \right).
\end{align}
\end{example}

\section{Proof Outline of Theorem  \ref{thm:main_noisy_achievability}}
\label{sec:proof_main}

The proof follows the proof method of the noiseless case \cite[Thm. 2]{gerzon2025capacity}, and is based on the extended Feinstein bound. Let $P_{Y^{m}\mid X^{n}}$ denote the Markov kernel from a random input $X^{n}$ to the output $Y^{m}$, according to \eqref{def:well_conditioned}. Let $P_{X^{n}}$ denote the input distribution, and $P_{Y^{m}}$ be the corresponding output distribution. Let the information density be $i(x^{n};y^{m}):=\log\frac{P_{Y^{m}\mid X^{n}}(y^{m}\mid x^{n})}{P_{Y^{m}}(y^{m})}$.The extended Feinstein bound   \cite{polyanskiy2024information} assures the following: Let $F_n$ be a constraint set for the codewords (e.g. $\sum_{i=1}^n x_i=ng_n$). For any $\gamma>0$ and $M\in\mathbb{N}_{+}$ there exists a codebook ${\cal C}_{M}=\{x^{n}(1),\ldots,x^{n}(M)\}$ such that $x^{n}(j)\in F_{n}$ for all $j\in[M]$, and whose maximal error probability is $\epsilon_{n}$, where
\begin{equation}
\epsilon_{n}P_{X^{n}}(F_{n})\leq\Pr\left[i(X^{n};Y^{m})\leq\log\gamma\right]+\frac{M}{\gamma}.
\end{equation}

Accordingly, the proof of the noiseless case \cite[Thm. 2]{gerzon2025capacity} is based on three, tightly intertwined, main steps: (i) Analyzing the concentration of the information density (\cite[Prop. 5]{gerzon2025capacity}), (ii) analyzing the expected information density (which is the mutual information) (\cite[Prop. 7]{gerzon2025capacity}) and the effect of integer inputs on its value, and  (iii) analyzing the probability of the constraint set $P_{X^{n}}(F_{n})$ \cite[Prop. 9]{gerzon2025capacity}. The last step remains unchanged, since the probability of the typical input set depends solely on the encoder's distribution choice and holds independently of the channel matrix. We thus henceforth only focus on the first two steps. 

In general, analyzing the multinomial channel is complicated due to the statistical dependency between the output coordinates (reflected, e.g., in the constraint $\sum Y_j = nr_n$). Thus, as in \cite{gerzon2025capacity}, we utilize the "Poissonization" technique \cite{mitzenmacher2017probability} to replace  the analysis of the multinomial  channel with a Poisson channel  $P_{Z^m|X^n}$ where
\begin{align}
    Z_j \sim \text{Poisson}\left(\lambda_j(x^n)\right), \quad \lambda_j(x^n) = \frac{r_n}{g_n} \sum_{i=1}^n x_i W_{i,j}.
    \label{eq:poisson_model}
\end{align}
Here, $\lambda_j$ is the expected count of type $j$. Note that, unlike the noiseless case, where $Z_i$ depends only on $x_i$, here $Z_j$ depends on the entire vector $x^n$ via the $j$th column of $W$. Following this reduction, the proof continues with the analysis of the Poisson channel. As said, two steps are required. First, to show that the information density in the Poisson channel $ i(x^n; z^m) := \log \frac{P_{Z^m|X^n}(z^m|x^n)}{P_{Z^m}(z^m)}$ concentrates, and in fact, due to terms that are the result of the Poissonization, it is required to show that this concentration is sufficiently fast. Second, to analyze the concentrated value, which is the expected value of the information density, given by the mutual information $I(X^n;Z^m)$. In these two steps lie the main technical challenges of the analysis of the noisy case, compared to the noiseless case. In the first step, proving concentration requires significantly more complicated analysis, and in the second step, the noise in the channel leads to a reduction in the mutual information compared to the noiseless case, which decreases the achieavable rate. In what follows, we will outline the main ideas in these derivations.

\subsection{Concentration of the Information Density} 

In this section, we consider the concentration of $i(X^n; Z^m)$ to its expected value, under the channel \eqref{eq:poisson_model}. This is achieved by restricting the input counts to a support $X_i\in[s_n]$ and is done in two sub-steps. First, conditioning on $X^n$ and analyzing the concentration with respect to (w.r.t.) $Z^m$ to its conditional expectation (a conditional mutual information), and second, analyzing the convergence of the conditional expectation for random $X^n$. In the noiseless case, the first step is derived by a concentration inequality for Lipschitz functions of Poisson RVs due to Bobkov-Ledoux \cite{bobkov1998modified}, and the second step via the standard Hoeffding inequality. Here, for the noisy case, both steps are more challenging. 

For the first step, bounding the Lipschitz semi-norm governing the concentration of the information density is more complicated, since the identification noise $W$ introduces dependencies between input intensities  and output counts. This requires to incorporate the condition number of $W$ into the bound on the Lipschitz semi-norm. Concretely, we evaluate the  Lipschitz semi-norm for the information density $f_{x^n}(z^m) := i(x^n; z^m)$ in the noisy setting, generalizing \cite[Lem. 10]{gerzon2025capacity}. Since this function has discrete inputs, the Lipschitz semi-norm is defined by the discrete derivative, 
\begin{align}
\beta_{\text{Lip}} := \sup_{z^m \in \mathbb{N}^m, j \in [m]} \left| f_{x^n}(z^m + e_j) - f_{x^n}(z^m) \right|,\\
\lambda_{\text{Lip}} = \max_j \lambda_j
\end{align}
where $e_j$ is the $j$th standard basis vector. The key difference from the noiseless case is that each output component $Z_j$ depends on the weighted sum $\sum_{i=1}^n x_i W_{i,j}$ rather than just a single input component. Using the Poisson probability mass function,  we show using some algebra that
\begin{align}
\frac{\min_{i,j: W_{i,j}>0} W_{i,j}}{s_n \max_{i,j} W_{i,j}}&\leq
\frac{P_{Z_j|X^n}(z_j + 1 | x^n)}{P_{Z_j|X^n}(z_j | x^n)} \cdot \frac{P_{Z_j}(z_j)}{P_{Z_j}(z_j + 1)} \\ &\leq s_n \cdot \frac{\max_{i,j} W_{i,j}}{\min_{i,j: W_{i,j}>0} W_{i,j}},
\end{align}
where $s_n$ is a support size we impose, as a part of the proof, on the probability distribution of each $X_i$. By an analogous argument for 
$\frac{P_Z^m(z^m)}{P_Z\bigl(z^m + e_j\bigr)}$, taking logarithms, using the definition of the Lipschitz semi-norm and the well-conditioning property (Definition \ref{def:well_conditioned}), we obtain 
\begin{align}
\beta_{\text{Lip}} &\leq \log s_n + \log\left(\frac{\max_{i,j} W_{i,j}}{\min_{i,j: W_{i,j}>0} W_{i,j}}\right) \\
&\leq \log s_n - \log \tau_n.
\end{align}

In order to obtain sufficiently fast concentration, it is required that $\beta_{\text{Lip}} = o(\sqrt{n})$, which constrains both $s_n$, as a technical part of the proof, and 
also $\log \tau = o(\sqrt{n})$. The later is an additional constraint, appearing only for the noisy case. 

For the second step, to wit, the concentration over a random input $X^n$, Hoeffding's inequality cannot be used since the involved RVs are dependent. Instead, we invoke Talagrand's concentration inequality for convex functions \cite[Corollary 4.23]{van2014probability}): If $f: [0,1]^n \to \mathbb{R}$ is convex and $\|\nabla f\|_2 \le L$, then $f$ is $L^2$-sub-Gaussian. Concretely, the relevant function is  
\begin{equation}
h(x^{n}):=\sum_{j=1}^{m}H_{\text{Poiss}}\left(\frac{r_{n}}{g_{n}}\sum_{i=1}^{n}W_{ij}x_{i}\right), 
\end{equation}
where $H_{\text{Poiss}}(\lambda)$ is the entropy of a Poisson RV with parameter $\lambda$. To use Talagrand's inequality, we need to bound its gradient norm $\|\nabla h\|_2$, and show that it is a convex function. We achieve that by analyzing the second derivative of $H_{\text{Poiss}}$. This is non trivial, there are no closed-form expressions for the Poisson entropy. Nonetheless, we utilize the Poisson forward-difference identity \cite{adell2010sharp} to show that 
\begin{equation}
    \left|\frac{\mathrm{d}}{\mathrm{d}\lambda}H'(\lambda)\right| \leq \log(1+1/\lambda), 
\end{equation}
and then again to show that 
\begin{equation}
\frac{\mathrm{d}^{2}}{\mathrm{d}\lambda^{2}}H_{\mathrm{Poiss}}(\lambda) \le -\frac{e^{-\lambda}}{\lambda} < 0.
\end{equation}
Hence, $\lambda\to H_{\text{Poiss}}(\lambda)$ is a convex function, and so is $f(x^n)$. Similarly, the gradient norm is bounded by the first derivative, where in this case, we account for the column sums of $W$. Assuming the column sums satisfy $\sum_{i=1}^n W_{ij} \geq n^{-\eta}$ for some $\eta > 0$ (ensured by the well-conditioning assumption of Definition \ref{def:well_conditioned}), we have
\begin{align}
    \lambda_j(x^n) = \frac{r_n}{g_n}\sum_{i=1}^n x_i W_{ij} \geq \frac{r_n}{g_n} n^{-\eta}.
\end{align}
For sufficiently large $n$, this gives:
\begin{align}
    |H'(\lambda_j)| \leq \log(1 + n^{\eta}) \approx \eta \log n.
\end{align}
Substituting into the gradient expression and using the row-stochastic property $\sum_{j=1}^m W_{k,j} = 1$:
\begin{align}
    \left|\frac{\partial h(x^n)}{\partial x_k}\right| &\leq \frac{r_n}{g_n} \sum_{j=1}^m W_{k,j} \cdot \eta \log n \\
    &= \frac{r_n}{g_n} \eta \log n \sum_{j=1}^m W_{k,j}  \\
    &= \frac{r_n}{g_n} \eta \log n.
\end{align}

The squared Euclidean norm of the gradient becomes:
\begin{align}
    \|\nabla h\|_2^2 &= \sum_{k=1}^n \left(\frac{\partial h(x^n)}{\partial x_k}\right)^2 \\
    &\leq \sum_{k=1}^n \left(\frac{r_n}{g_n} \eta \log n\right)^2 
    = n \left(\frac{r_n}{g_n} \eta \log n\right)^2.
\end{align}

Now, for Talagrand's inequality to yield a vanishing probability bound, the variance proxy must satisfy $\sigma^2 = o(n^2)$. The variance proxy is proportional to $(s_n)^2 \|\nabla h\|_2^2$. This leads to the requirement:
\begin{align}
    (s_n)^2 \cdot n \left(\frac{r_n}{g_n}\right)^2 \log^2 n = o(n^2).
\end{align}

Since $\frac{r_n}{g_n}$ is bounded by a constant (assumption: $cg_n \leq r_n \leq eg_n$), this simplifies to $s_n = o(\sqrt{n})$. 

To conclude this derivation, we mention that the use of Talagrand's inequality is the source of the condition on $\beta$, when the theorem is applied to a DNA storage channel. Specifically, in the proof we choose $s_n=g_n^{1+\rho}$ for some small value of $\rho$. Choosing the DNA parameters $n=|\mathcal{A}|^{\beta\log K}$ and $ng_n=K$, the above leads to $s_n \approx n^{\frac{1}{\beta \log |\mathcal{A}|} - 1}$. To assure that $s_n=o(\sqrt{n})$ it is thus required that $\beta > \frac{2}{3 \log |\mathcal{A}|}$. Therefore, one possible way to increase this regime is to either use a different concentration inequality than Talagrand's inequality, or to refine the analysis. 
\subsection{Characterizing the Capacity Penalty}
Finally, we characterize the loss in mutual information due to the $W$. This extends \cite[Prop. 7]{gerzon2025capacity} by quantifying the entropy loss induced by the noise in the channel. We decompose $I(X^n;Z^m)=H(Z^m)-H(Z^m\mid X^n)$, and separately analyze the two entropy terms. For the conditional entropy $H(Z^m\mid X^n)$, we note that conditioned on $X^n$, the output is a sequence of independent Poisson RVs (see \eqref{eq:poisson_model}), and so the conditional entropy is additive. For the output entropy $H(Z^m)$, a direct evaluation appears cumbersome, since $Z^m$ is a vector of dependent RVs, mixtures of Poisson distributions. To overcome this, we follow the analysis of the Poisson channel capacity \cite[Proposition 11]{lapidoth2008capacity}. There,  the output entropy of the Poisson channel is lower bounded with the \textit{differential entropy} of its input, which in our case is $h(Wx)$. In turn, the differential entropy of a linear transformation of a random vector is directly related to the differential entropy of the random vector: the differential entropy $h(WX)$ is larger by $\frac{1}{2}\log \det(WW^\top)$ from the differential entropy $h(X)$ (if the dimension of the output space is at least that of the input ($m \ge n$), and $W$ is full rank, then this follows immediately from \cite[Theorem 8.6.4]{cover1991elements}, based on the  determinant of Jacobian matrix of the transformation). In turn, all the above leads to 
\begin{align}
    I(X^n; Z^m) \approx I(X^n; Z^n_{\text{ideal}}) + \frac{1}{2}\log \det(WW^\top),
\end{align}
where $Z_{\text{ideal}}^n$ is the output of the Poisson channel when $W$ is the identity matrix (noiseless channel).
With this, following the lines of the proof of the noiseless capacity derived in \cite{gerzon2025capacity}, the achievable rate is lower bounded as stated in Theorem \ref{thm:main_noisy_achievability}. We also mention, that the required scaling of the input support to an effective support $s_n^* = s_n \cdot \max_{j \in [m]} \left( \sum_{i=1}^n W_{ij} \right)$ is also adjusted in the proof compared to the noiseless case, to account for column-sum variations. To control these variations, we require a constant $C_{\text{max}}$ such that $s_n^* \le C_{\text{max}} s_n$, thereby ensuring that the concentration inequality derived via the Bobkov-Ledoux inequality remains valid. This completes the proof.

\section{Conclusion}
We derived an achievability bound on the capacity  of noisy frequency-based channels. The result generalizes the noiseless capacity by adding a penalty term dependent on the Markov kernel matrix $W^{(n)}$. The analysis highlights the impact of the dependency of the output counts of one of the items (molecules in DNA storage) on a set of counts of the input, and possibly all possible input counts. When utilized for DNA storage channel, this results a loss of at most $LK^{\beta\log|{\cal A}|}\cdot \Delta$ in total number of reliably stored bits, where $\Delta$ is a constant depending on $W$. 

Future work,  includes extending this to  other noise models, such as insertion/deletion errors, which break the tensor structure of $W$, and refining the analysis to avoid the $\beta \in \left(\frac{2}{3\log|\mathcal{A}|}, \frac{1}{\log|\mathcal{A}|}\right)$ constraint required for the results to hold for DNA storage channel, as well as a refinement for the limitations over $W$.

\clearpage
\bibliographystyle{IEEEtran}
\bibliography{DNA_short}

\clearpage
\onecolumn
\appendices
\section{Full Proof of Theorem \ref{thm:main_noisy_achievability}}
\subsection{Full Proof of \ref{thm:main_noisy_achievability}}

To establish the achievability bound for the general noisy frequency-based channel, we address the mixing effect of the transition matrix $W$ by adapting the concentration of measure arguments. We utilize a refined Feinstein bound that incorporates the condition number of $W$ into the Lipschitz constant for the information density (Proposition~\ref{prop:Feinstein_Adapted}). Furthermore, we lower bound the mutual information by accounting for the entropy loss induced by the linear transformation $W$ (Theorem~\ref{thm:noisy_achievability}).

\begin{proof}
The proof combines the extended Feinstein bound with the asymptotic analysis of the mutual information in the Poisson domain.

First, we invoke the adapted Feinstein bound (Proposition~\ref{prop:Feinstein_Adapted}). By Poissonizing the multinomial channel, we analyze the information density $i(X^n; Z^m)$ of the channel $Z^m \sim \text{Poisson}(\frac{r_n}{g_n}X^nW)$. Since $W$ is well-conditioned, the information density satisfies the Lipschitz property with semi-norm $\beta = \log s_n - \log \tau$. Using the adapted concentration inequality (Lemma~\ref{prop:Concentration_Under}) and the concentration of conditional entropy (Lemma~\ref{lem:Concentration_of_Poisson_entropy}), we guarantee that for a codebook of size $M$, the error probability $\epsilon_n$ vanishes as $n \to \infty$, provided that $\log M \approx nI(X^n; Z^m)$.

Second, we lower bound the mutual information $I(X^n; Z^m)$. Applying Theorem~\ref{thm:noisy_achievability}, we account for the entropy loss induced by the linear transformation $W$ relative to the noiseless case. This yields the penalty term involving $\det(WW^\top)$, resulting in the bound:
\begin{align}
I(X^n; Z^m) \geq \frac{n}{2}\log r_n - n\Psi\left(\frac{r_n}{g_n}\right) + \frac{1}{2}\log\det(WW^\top) - o_n(1).
\end{align}

Finally, \cite[Proposition~9]{gerzon2025capacity} (which remains valid under the total mean constraint) ensures that there exists a constant composition set $F_n$ with sufficient probability mass, $P_{X^n}(F_n) \geq \frac{1}{3ng_n}$. Combining the mutual information estimate with the error probability bound from the Feinstein argument yields the stated achievable rate.
\end{proof}

\subsection{Adjusting Prop. 5 from \cite{gerzon2025capacity}}

We now combine the concentration of the information density (Lemma~\ref{prop:Concentration_Under}, adapted version of \cite[Lemma 10]{gerzon2025capacity}) and the concentration of the conditional entropy (Lemma~\ref{lem:Concentration_of_Poisson_entropy}, adapted version of Lemma 11 in \cite{gerzon2025capacity}) to establish the achievability bound.

\begin{proposition}[Adapted Proposition 5: Feinstein Bound for Poisson Channels]
\label{prop:Feinstein_Adapted}
Consider a channel $P_{Z^m|X^n}$  where $m \geq n$ and an input distribution $P_{X^n}$. Let $i(x^n; z^m)$ denote the information density:
\begin{align}
i(x^n; z^m) := \log \frac{P_{Z^m|X^n}(z^m|x^n)}{P_{Z^m}(z^m)}.
\end{align}
For any threshold $\gamma > 0$ and blocklength $n$, there exists a code with $M$ codewords and average error probability $\epsilon_n$ satisfying:
\begin{align}
\epsilon_n \leq \mathbb{P}\left[ i(X^n; Z^m) \leq \log \gamma \right] + \frac{M}{\gamma}.
\end{align}
The code size $M$ is chosen as:
\begin{align}
\log M = nI(X^n; Z^n) - 3n\delta_n - \frac{1}{2} \log(6\pi n r_n), 
\end{align}
where $I(X^n; Z^n)$ is the mutual information over the noisy channel $W$.

The maximal error probability $\epsilon_n$ on the multinomial channel from $X^n$ to $Z^m$ is bounded as:

\begin{align}
\epsilon_n \leq \frac{11}{P_{X}^{\otimes n}(F_n)}
\left[ \sqrt{nr_n} \exp \left[ -n\delta_n^2 \cdot \left( \frac{c\eta}{\left(\frac{r_n}{g_n}\right)^2 s_n^2 \log^2 n} \wedge \frac{g_n}{19 C_{\text{max}}r_n s_n (\log s_n - \log \tau)^2} \right) \right] + e^{-n\delta_n} \right].
\end{align}
\end{proposition}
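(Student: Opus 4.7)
The plan is to follow the structure of \cite[Prop.~5]{gerzon2025capacity}, keeping the extended Feinstein bound as the outer machinery and replacing its two internal concentration steps with the versions adapted to the noisy Poissonized channel. I would first choose the information-density threshold $\log\gamma := nI(X^n;Z^m) - 2n\delta_n$; the Feinstein inequality stated at the start of Section~V then reduces the task to upper bounding $\Prob[i(X^n;Z^m)\le \log\gamma]$ by a vanishing quantity and controlling $M/\gamma$. A triangle-inequality decomposition splits the bad event into two sub-events: (A) the information density $i(X^n;Z^m)$ deviates from its conditional expectation $\Expect[i(X^n;Z^m)\mid X^n]$ by more than $n\delta_n$, and (B) the conditional mutual information $\Expect[i(X^n;Z^m)\mid X^n]$ deviates from $I(X^n;Z^m)$ by more than $n\delta_n$.

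For sub-event (A), I would condition on $X^n=x^n$ and invoke the Bobkov--Ledoux concentration for Lipschitz functions of independent Poissons, as packaged in Lemma~\ref{prop:Concentration_Under}. The key inputs from Section~V are that on the support $[s_n]^n$, with $W$ well-conditioned, the discrete-Lipschitz semi-norm of $z^m\mapsto i(x^n;z^m)$ is at most $\log s_n - \log\tau_n$, while $\lambda_{\text{Lip}} \le C_{\max} r_n s_n/g_n$. Plugging these into Bobkov--Ledoux and integrating over the marginal of $X^n$ yields a Gaussian-type tail whose exponent is proportional to $n\delta_n^2 g_n / \bigl(C_{\max} r_n s_n (\log s_n - \log\tau_n)^2\bigr)$, reproducing the second summand inside the $\wedge$ in the stated bound.

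For sub-event (B), the obstacle is that $X^n$ has dependent coordinates inside $F_n$, so Hoeffding is unavailable. I would apply Talagrand's inequality for convex, coordinate-Lipschitz functions on a bounded product space (Lemma~\ref{lem:Concentration_of_Poisson_entropy}) to
\begin{align*}
h(x^n) \;=\; \sum_{j=1}^m H_{\text{Poiss}}\!\left(\tfrac{r_n}{g_n}\sum_{i=1}^n W_{ij}x_i\right),
\end{align*}
since the $X^n$-independent piece of $\Expect[i(X^n;Z^m)\mid X^n]$ does not contribute to deviations. Convexity of $h$ follows from the Adell--Lekuona forward-difference identity for $H_{\text{Poiss}}''$ recalled in Section~V, while the gradient bound $\|\nabla h\|_2 \le \sqrt{n}\,(r_n/g_n)\,\eta\log n$ follows by combining $|H'_{\text{Poiss}}(\lambda_j)| \le \log(1+1/\lambda_j)$, the column lower bound $\sum_i W_{ij} \ge n^{-\eta}$ from well-conditioning, and the row-stochastic identity $\sum_j W_{kj} = 1$. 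Since Talagrand acts on the rescaled cube of side $s_n$, the resulting sub-Gaussian proxy is of order $s_n^2 n (r_n/g_n)^2 \eta^2 \log^2 n$, yielding the first exponential in the stated bound.

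Finally, I would combine the two tail bounds with the Poissonization cost. Replacing the multinomial output by the Poissonized $Z^m$ requires conditioning on the total count, incurring a factor $\Prob[\mathrm{Poisson}(nr_n) = nr_n]^{-1} = \Theta(\sqrt{nr_n})$ by Stirling, which multiplies the sub-event (A) contribution; dividing through by $P_{X}^{\otimes n}(F_n)$ as dictated by the restricted Feinstein bound and absorbing the remaining polynomial factors into the universal constant $11$ produces the displayed inequality. The choice $\log M = nI(X^n;Z^n) - 3n\delta_n - \tfrac{1}{2}\log(6\pi nr_n)$ then makes $M/\gamma \le e^{-n\delta_n}$, which appears as the second summand inside the brackets. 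I expect the main technical obstacle to be the Talagrand step: establishing convexity of $H_{\text{Poiss}}$ via the forward-difference identity and tracking the well-conditioning parameter $\eta$ carefully enough through the gradient bound to guarantee that the scaling $s_n = o(\sqrt{n})$ already suffices to make the sub-event (B) probability vanish, which is ultimately what pins down the regime $\beta \in \left(\tfrac{2}{3\log|\mathcal{A}|},\tfrac{1}{\log|\mathcal{A}|}\right)$ in the DNA corollary.
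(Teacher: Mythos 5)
Your proposal matches the paper's proof: the same Feinstein threshold-decoder skeleton, the same split of the atypicality event into concentration of the information density given $X^n$ (Bobkov--Ledoux via Lemma~\ref{prop:Concentration_Under}, giving the $\frac{g_n}{19 C_{\text{max}} r_n s_n (\log s_n - \log\tau)^2}$ rate) and concentration of the conditional entropy over random $X^n$ (Talagrand via Lemma~\ref{lem:Concentration_of_Poisson_entropy}, giving the $\frac{c\eta}{(r_n/g_n)^2 s_n^2\log^2 n}$ rate), combined with the $\sqrt{nr_n}$ Poissonization factor, the division by $P_X^{\otimes n}(F_n)$, and the choice of $M$ that makes $M/\gamma \le e^{-n\delta_n}$. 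Your account is in fact more explicit than the paper's, which largely defers to equations (100)--(106) of the noiseless proof; no gaps.
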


\begin{proof}
The proof follows the standard Feinstein bound approach using the threshold decoder. The error probability is bounded by the probability that the random codeword does not behave typically (concentration of information density) plus the probability that another codeword looks typical (packing lemma).

The exponential decay term is determined by the slower of the two concentration events:
\begin{enumerate}
    \item \emph{Concentration of Information Density given $X^n$:} From Lemma~\ref{prop:Concentration_Under}, the rate is determined by the Lipschitz constant $\beta = \log s_n - \log \tau$, yielding the term $\frac{g_n}{19 C_{\text{max}} r_n s_n (\log s_n - \log \tau)^2}$. this replaces the step in \cite[equation (105)]{gerzon2025capacity}
    \item \emph{Concentration of Expected Information Density (Entropy):} From Lemma~\ref{lem:Concentration_of_Poisson_entropy} (Eq.~\ref{eq:concentration_poisson_min_prob}), the rate is determined by the gradient bound of the Poisson entropy, yielding the term $\frac{c\eta}{(r_n/g_n)^2 s_n^2 \log^2 n}$. this replaces the step in \cite[equation (103)]{gerzon2025capacity}
\end{enumerate}
The rest of equations (100) to (106) in \cite{gerzon2025capacity} require only minor adaptation.
Taking the minimum of these two rates in the exponent provides the valid upper bound for the intersection of these events, which results in
\begin{align}
\epsilon_n P_{X}^{\otimes n} (F_n) \leq 4e\sqrt{nr_n} \exp \left[ -n\delta_n^2 \cdot \left( \underbrace{\frac{c\eta}{\left(\frac{r_n}{g_n}\right)^2 s_n^2 \log^2 n}}_{\text{Entropy Conc. (Lemma \ref{lem:Concentration_of_Poisson_entropy})}} \wedge \underbrace{\frac{g_n}{19 C_{\text{max}} r_n s_n (\log s_n - \log \tau)^2}}_{\text{Info Density Conc. (Lemma \ref{prop:Concentration_Under})}} \right) \right] \nonumber \\
+ \frac{M}{e^{I(X^n;Z^m)-2n\delta_n-\frac{1}{2} \log(6\pi nr_n)}}. \label{eq:final_noisy_bound}
\end{align}.
\end{proof}

\section{Analysis of Adjustments}

\subsection{Degradation of Concentration Parameters}
In the noiseless case (where $W$ is the identity), the Lipschitz semi-norm $\beta$ used in the Bobkov-Ledoux bound was simply $\log s_n$. For the noisy case, as derived in equation \eqref{eq:additive_W_ratio}, the semi-norm increases to:
\begin{align}
\beta_{\text{noisy}} = \log s_n + \log\left(\frac{\max_{i,j} W_{i,j}}{\min_{i,j: W_{i,j}>0} W_{i,j}}\right).
\end{align}
Using the well-conditioning parameter $\tau$, this becomes $\beta_{\text{noisy}} = \log s_n - \log \tau$.
Impact: Since $0 < \tau \le 1$, $-\log \tau \ge 0$. The Lipschitz constant increases, which appears in the denominator of the exponent in the concentration bound (Term A). This implies that for noisy channels, the convergence of the information density to its mean is slower than in the noiseless case, requiring larger block lengths $n$ to achieve the same error probability.

\subsection{Requirements on Matrix $W$ - Well-Defined Matrix}
\label{sec:requirements_on_W}
To ensure the proof holds and $\beta_{\text{noisy}}$ does not diverge (which would render the bound useless), $W$ must satisfy the \emph{Well-Conditioned} property. Specifically:
\begin{enumerate}
    \item Non-vanishing entries (Condition number): The ratio between the maximum and minimum non-zero entries in any column must be bounded. We require $\kappa(W) \leq 1/\tau < \infty$. This ensures $\beta_{\text{noisy}}$ is finite.
    \item $\log \tau = o(\sqrt{n})$, to prevent $\beta_{\text{noisy}}$ from diverging.
    \item Column Sums (Gradient bound): For the tighter bound in Lemma 11 (Eq. \ref{eq:concentration_poisson_min_prob}) to apply, the column sums must not vanish too quickly (bounded below by $n^{-\eta}$). This ensures the gradient of the entropy in the Talagrand inequality does not explode.
    \item Column Sums (maximal Poisson Mean): we wish the parameter $\bar\lambda$ to remain useful for the Bobkov-Ledoux inequality, hence we demand $\sum_{i=1}^n W_{ij} \le C_{\text{max}}$.
\end{enumerate}
If $W$ contains entries arbitrarily close to 0 (without being exactly 0), $\beta \to \infty$, and the concentration guarantee vanishes.

\subsection{Concentration of the Information Spectrum}
Due to the concentration bound of Lipschitz functions of Poisson random variables (RVs) due to Bobkov and Ledoux, the support will remain unchanged, as shown next.

We begin with the first step, for which we recall that $\text{supp}(P_X) \subseteq [s_n] = \{1, 2, \ldots, s_n\}$, and specifically, that $P_X(0) = 0$. We use this assumption to establish that for any $x^n \in \text{supp}(P_X^{\otimes n})$, the random variable
\begin{align}
f_{x^n}(Z^m) := \left[- \log P_{Z^m}(Z^m) + \sum_{j=1}^m \log P_{Z_j|X^n}(Z_j | X^n = x^n)\right] 
\end{align}
concentrates fast around its expected value
\begin{align}
I(Z^m; X^n = x^n) := - \mathbb{E}\left[\log P_{Z^m}(Z^m)\right] + \sum_{j=1}^m \mathbb{E}\left[\log P_{Z_j|X^n}(Z_j | x^n) \bigg| X^n = x^n\right]. 
\end{align}

We achieve this using a concentration bound of Lipschitz functions of Poisson RVs due to Bobkov and Ledoux~\cite[Prop. 11]{bobkov1998modified} stated there in Lemma~27 (Appendix~D). The result is as follows.

We will now adapt \cite[Lemma 10]{gerzon2025capacity}

\begin{lemma}[Adapted Lemma 10]
\label{lemma:Adapted_Lemma_10}
Assume that $\text{supp}(P_X) \subseteq [s_n]$ for some $s_n \in \mathbb{N}_+$. Let $x^n \in ([s_n])^{\otimes n}$. Then, with  $C$ as some constant such that $\bar\lambda
 = \frac{r_n}{g_n}\,s_n C$ for any $\delta \in (0, \frac{g_n}{r_n} s_n)$,
\begin{align}
\mathbb{P}\left[f_{x^n}(Z^m) < I(Z^m; X^n = x^n) - n\delta \bigg| X^n = x^n\right] \leq \exp\left[-n\frac{g_n \delta^2}{19 C r_n s_n \log^2 \left( s_n\cdot\left(\frac{\max_{i,j} W_{i,j}}{\min_{i,j: W_{i,j}>0} W_{i,j}}\right)\right)}\right]. 
\end{align}
\end{lemma}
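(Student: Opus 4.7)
The plan is to invoke the Bobkov--Ledoux concentration inequality for Lipschitz functions of independent Poisson random variables (Lemma~27, Appendix~D), in essentially the same way as in \cite[Lem.~10]{gerzon2025capacity}, but substituting the noise-adjusted Lipschitz semi-norm sketched in Section~\ref{sec:proof_main}. Conditional on $X^n=x^n$, the coordinates $Z_1,\ldots,Z_m$ are mutually independent Poissons with means $\lambda_j(x^n) = \frac{r_n}{g_n}\sum_i x_i W_{ij}$, which is precisely the product setting in which that inequality applies to $f_{x^n}(Z^m)$.

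Two bookkeeping steps then set up the constants. First, using $x_i \le s_n$ together with the well-conditioning bound $\sum_i W_{ij} \le C_{\text{max}}$ from Definition~\ref{def:well_conditioned}, the maximum Poisson intensity is bounded by
\begin{equation}
\max_{j\in[m]} \lambda_j(x^n) \le \frac{r_n}{g_n}\, s_n\, C_{\text{max}} =: \bar\lambda,
\end{equation}
identifying $C = C_{\text{max}}$ in the stated form $\bar\lambda = \frac{r_n}{g_n} s_n C$. Second, the discrete Lipschitz semi-norm of $f_{x^n}$ is controlled via the two-sided ratio bound for $P_{Z^m}(z^m+e_j)/P_{Z^m}(z^m)$ derived in the proof outline; combining it with the Poisson identity $P(Z=z+1)/P(Z=z) = \lambda/(z+1)$ applied to the conditional factor yields
\begin{equation}
\beta_{\text{Lip}} \le \log\!\left(s_n \cdot \frac{\max_{i,j} W_{ij}}{\min_{i,j:\, W_{ij}>0} W_{ij}}\right),
\end{equation}
uniformly in $z^m$ and $j$. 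Feeding $\bar\lambda$ and $\beta_{\text{Lip}}$ into the Bobkov--Ledoux bound, whose lower-tail has sub-Gaussian-type form $\exp(-t^2/(c\bar\lambda\beta_{\text{Lip}}^2))$, with $t=n\delta$ reproduces the stated expression; the constant $19$ absorbs $c$, and the range $\delta \in (0, \frac{g_n}{r_n}s_n)$ matches the sub-Gaussian threshold inherited from Lemma~27.

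The main obstacle I anticipate is the second step, specifically justifying the ratio bound for the \emph{mixture marginal} $P_{Z^m}$. In the noiseless setting, $Z_j$ depends only on $x_j$, so the marginal factorizes and all ratios collapse coordinate-wise. In the noisy setting,
\begin{equation}
P_{Z^m}(z^m) = \mathbb{E}_{X^n}\!\left[\prod_{j=1}^m P_{Z_j\mid X^n}(z_j\mid X^n)\right]
\end{equation}
is a mixture of products whose Poisson means are coupled across $j$ through the full input vector and the $j$-th column of $W$. The plan is to pull the ratio inside the expectation, rewrite each pointwise ratio via the Poisson identity as $\lambda_j(X^n)/(z_j+1)$, and then use well-conditioning ($\kappa_j(W) \le 1/\tau_n$) together with $X_i \in [s_n]$ to bound the resulting mixing factor uniformly by $s_n\kappa_j(W) \le s_n/\tau_n$ from above and by $\tau_n/s_n$ from below, \emph{pointwise} in the mixing variables so the expectation preserves the inequalities. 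The care lies in ensuring that the $W$-dependence enters only through the column-wise condition number, so that the final Lipschitz constant depends on $W$ only via the single well-conditioning parameter $\tau_n$ as required for the subsequent constraint $-\log\tau_n = o(\sqrt n)$ to suffice.
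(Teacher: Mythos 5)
Your proposal follows essentially the same route as the paper's proof: condition on $X^n=x^n$ to get independent Poisson coordinates, bound $\bar\lambda$ by $\frac{r_n}{g_n}s_nC_{\text{max}}$ via the column-sum condition, control the discrete Lipschitz semi-norm by the two-sided ratio bound on $P_{Z^m}(z^m+e_j)/P_{Z^m}(z^m)$ (handling the mixture marginal exactly as you anticipate, by pulling the ratio inside the mixture and maximizing over the support of $P_X$), and feed $\bar\lambda$ and $\beta_{\text{Lip}}$ into Bobkov--Ledoux with the constant $19$ absorbing the $16\beta^2\bar\lambda+3\beta\delta$ denominator under the stated restriction on $\delta$. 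The "main obstacle" you flag is resolved in the paper precisely by your proposed argument, so the proposal is correct and matches the paper's proof.
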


\begin{proof}
To establish this, we begin by showing that if $x^n \in ([s_n])^{\otimes n}$ then $f_{x^n}(z^m)$ is Lipschitz with semi-norm $\beta = \log s_n$, as follows. We denote by $e^{(j)}(i) = (0, 0, \ldots, 1, 0, \ldots)$ the $i$-th standard basis vector in $\mathbb{R}^m$, similarly to~\cite{gerzon2025capacity}. Let $Z^m | X^n = x^n \sim \text{Poisson}\left(\frac{r_n}{g_n} \sum_{i=1}^n x_i W_{i,j}\right)$ for each component $j$. Then, it holds for any $x^n \in \mathbb{R}_+^n$ and $z \in \mathbb{N}$, yet also $\sum_{i=1}^n x_i W_{i,j}\in \mathbb{R}_+^m$ since $W$ satisfies that all entries are nonnegative values bounded by $1$.

\begin{align}
\frac{P_{Z_j|X^n}(z + 1 | x^n)}{P_{Z_j|X^n}(z | x^n)} &= \frac{e^{-\frac{r_n}{g_n}\sum_{i=1}^n x_i W_{i,j}} \left(\frac{r_n}{g_n}\sum_{i=1}^n x_i W_{i,j}\right)^{z+1}}{(z + 1)!} \cdot \frac{z!}{e^{-\frac{r_n}{g_n}\sum_{i=1}^n x_i W_{i,j}} \left(\frac{r_n}{g_n}\sum_{i=1}^n x_i W_{i,j}\right)^z} \\
&= \frac{r_n \sum_{i=1}^n x_i W_{i,j}}{g_n (z + 1)}. 
\end{align}

This remains unchanged even though $x$ is not an integer input, due to the Poisson distribution that leaves $z$ as a nonnegative integer.

Let $P_{Z_j}$ be the marginal resulting from $P_X \otimes P_{Z_j|X^n}$. Then, similarly,
\begin{align}
\frac{P_{Z_j}(z)}{P_{Z_j}(z + 1)} &= \frac{\sum_{\tilde{x}\in\text{supp}(P_X)} P_X(\tilde{x})P_{Z_j|X^n}(z | \sum_{i=1}^n \tilde{x}_i W_{i,j})}{\sum_{\tilde{x}\in\text{supp}(P_X)} P_X(\tilde{x})P_{Z_j|X^n}(z + 1 | \sum_{i=1}^n \tilde{x}_i W_{i,j})} \\
&\leq \max_{\tilde{x}\in\text{supp}(P_X)} \frac{P_{Z_j|X^n}(z | \sum_{i=1}^n \tilde{x}_i W_{i,j})}{P_{Z_j|X^n}(z + 1 | \sum_{i=1}^n \tilde{x}_i W_{i,j})} \\
&= \max_{\tilde{x}\in\text{supp}(P_X)} \frac{g_n (z + 1)}{r_n \sum_{i=1}^n \tilde{x}_i W_{i,j}}. 
\end{align}

The key difference from the noiseless case is that each output component $Z_j$ now depends on the weighted sum $\sum_{i=1}^n x_i W_{i,j}$ rather than just $x_j$. This affects the Lipschitz constant, which now depends on the range of values that $\sum_{i=1}^n \tilde{x}_i W_{i,j}$ can take over the support of $P_X$. The support of $\sum_{i=1}^n x_i W_{i,j}$ is still discrete, but not integer valued.

The overall change will be\\
\begin{align}
(*) = \frac{P_{Z^m|X^n}(z + e^{(j)}(i) | x^n)}{P_{Z|X^n}(z | x^n)} = \\ \prod_{j=1}^{m} \frac{P_{Z_j|X^n}(z_j + e^{(j)}(i) | x^n)}{P_{Z_j|X^n}(z_j | x^n)} =\\ \frac{P_{Z_i|X^n}(z_i + e^{(i)}(i) | x^n)}{P_{Z_i|X^n}(z_i | x^n)} \cdot \prod_{j=1, j\ne i}^{m} \frac{P_{Z_j|X^n}(z_j | x^n)}{P_{Z_j|X^n}(z_j | x^n)} = \\ \frac{P_{Z_i|X^n}(z_i + e^{(i)}(i) | x^n)}{P_{Z_i|X^n}(z_i | x^n)}
\end{align}

\begin{align}
(**)\quad
\frac{P_Z^m(z^m)}{P_Z\bigl(z^m + e^{(j)}\bigr)}
&= \frac{\displaystyle \sum_{x^n}P_{X^n}(x^n)\,P_{Z^m\mid X^n}\bigl(z^m\mid x^n\bigr)}{\displaystyle \sum_{x^n}P_{X^n}(x^n)\,P_{Z^m\mid X^n}\bigl(z^m + e^{(j)}\mid x^n\bigr)} \\[6pt]
&\le \max_{x^n} \frac{P_{Z^m\mid X^n}(z^m\mid x^n)}{P_{Z^m\mid X^n}(z^m + e^{(j)}\mid x^n)} \\[6pt]
&= \max_{x^n} \frac{P_{Z_j\mid X^n}(z_j\mid x^n)\,\prod_{k\neq j}P_{Z_k\mid X^n}(z_k\mid x^n)}{P_{Z_j\mid X^n}(z_j+1\mid x^n)\,\prod_{k\neq j}P_{Z_k\mid X^n}(z_k\mid x^n)} \\[6pt]
&= \max_{x^n} \frac{P_{Z_j\mid X^n}(z_j\mid x^n)}{P_{Z_j\mid X^n}(z_j+1\mid x^n)} = \max_{x^n} \frac{g_n\,(z_j+1)}{r_n\sum_{t=1}^n x_t\,W_{t,j}}.
\end{align}

Now, we wish to work with an unknown set of discrete values ranging from $0$ to $n\cdot s_n \cdot \max_{j} \sum_{i=1}^n W_{i,j}$. Hence,
\begin{align}
(*) \cdot (**) &= \frac{P_{Z_j|X^n}(z + 1 | x^n)}{P_{Z_j|X^n}(z | x^n)} \cdot \frac{P_{Z_j}(z)}{P_{Z_j}(z + 1)} \leq \max_{\tilde{x}\in\text{supp}(P_X)} \frac{\sum_{i=1}^n \tilde{x}_i W_{i,j}}{\sum_{i=1}^n x_i W_{i,j}} \leq s_n \cdot \frac{\max_{i,j} W_{i,j}}{\min_{i,j} W_{i,j}}. 
\end{align}

For a doubly stochastic or well-conditioned transition matrix, this ratio is bounded, giving us the Lipschitz property with semi-norm $\beta = \log s_n + \log\left(\frac{\max_{i,j} W_{i,j}}{\min_{i,j} W_{i,j}}\right)$.

\begin{align}
\frac{P_{Z_j|X^n}(z + 1 | x^n)}{P_{Z_j|X^n}(z | x^n)} \cdot \frac{P_{Z_j}(z)}{P_{Z_j}(z + 1)} &\geq \min_{\tilde{x}\in\text{supp}(P_X)} \frac{\sum_{i=1}^n \tilde{x}_i W_{i,j}}{\sum_{i=1}^n x_i W_{i,j}} \geq \frac{\min_{i,j: W_{i,j}>0} W_{i,j}}{\max_{i,j} W_{i,j}} \geq \frac{1}{s_n}. 
\end{align}

Thus, for any $x^n \in \text{supp}(P_X) \subseteq [s_n]^n$ and $z \in \mathbb{N}$,
\begin{align}
\left|\log \frac{P_{Z_j|X^n}(z + 1 | x^n)}{P_{Z_j}(z + 1)} - \log \frac{P_{Z_j|X^n}(z | x^n)}{P_{Z_j}(z)}\right| \leq \log s_n + \log\left(\frac{\max_{i,j} W_{i,j}}{\min_{i,j: W_{i,j}>0} W_{i,j}}\right). 
\end{align}

The additive form of $f_{x^n}(z^m)$ then implies that
\begin{align}
\label{eq:additive_W_ratio}
\max_{z^m\in\mathbb{N}^m} |f_{x^n}(z^m + e^{(j)}(i)) - f_{x^n}(z^m)| \leq \log s_n + \log\left(\frac{\max_{i,j} W_{i,j}}{\min_{i,j: W_{i,j}>0} W_{i,j}}\right). 
\end{align}
\end{proof}

\begin{lemma}[Concentration Under Well-Conditioning]
\label{prop:Concentration_Under}
If $W$ is well-conditioned (definition \ref{def:well_conditioned}) with parameters $\tau_n > 0, \eta >0$ and a constant $C_{\text{max}}$, then the concentration inequality in Lemma~10 holds with Lipschitz constant $\beta = \log s_n - \log \tau$, and the probability bound becomes:
\begin{align}
\mathbb{P}\left[f_{x^n}(Z^m) < I(Z^m; X^n = x^n) - n\delta \bigg| X^n = x^n\right] \leq \exp\left[-n\frac{g_n \delta^2}{19 C_{\text{max}} r_n s_n (\log s_n - \log \tau)^2}\right].
\end{align}
\end{lemma}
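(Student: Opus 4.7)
The plan is to obtain Lemma \ref{prop:Concentration_Under} as a clean specialization of the more general Lemma \ref{lemma:Adapted_Lemma_10}, by substituting the quantities controlled by the well-conditioning hypothesis into the concentration estimate already proved there. No additional probabilistic machinery is needed; the entire task is to convert the combinatorial/analytic constants appearing in Lemma~\ref{lemma:Adapted_Lemma_10} into the cleaner quantities $\tau_n$ and $C_{\text{max}}$ of Definition~\ref{def:well_conditioned}, and to verify that the resulting bound has the announced form.

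Concretely, I would proceed in three short steps. First, I would revisit the Lipschitz estimate from Lemma~\ref{lemma:Adapted_Lemma_10}. Examining the derivation, the discrete derivative of $f_{x^n}$ in coordinate $j$ is controlled by a ratio that only involves the $j$th column of $W$; hence the right quantity is really the per-column condition number $\kappa_j(W) = \max_{i} W_{i,j}/\min_{i:W_{i,j}>0}W_{i,j}$, of which $\max_{i,j}W_{i,j}/\min_{i,j:W_{i,j}>0}W_{i,j}$ is merely a (loose) upper bound. Under Definition~\ref{def:well_conditioned} we have $\kappa_j(W) \le 1/\tau_n$ uniformly in $j$, so taking the supremum over $j \in [m]$ and $z^m \in \mathbb{N}^m$ gives
\begin{align}
\max_{j,z^m}\bigl|f_{x^n}(z^m+e_j)-f_{x^n}(z^m)\bigr| \;\le\; \log s_n + \log(1/\tau_n) \;=\; \log s_n - \log \tau_n.
\end{align}
Thus the Lipschitz semi-norm that enters the Bobkov--Ledoux bound is $\beta_{\text{Lip}} = \log s_n - \log \tau_n$.

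Second, I would identify the role of the column-sum upper bound. In Lemma~\ref{lemma:Adapted_Lemma_10}, the parameter $\bar\lambda = (r_n/g_n)\,s_n\,C$ is the constant $C$ controlling the maximal Poisson intensity across output coordinates, which for well-conditioned $W$ can be taken to be $C = C_{\text{max}}$, since $\max_j \sum_{i=1}^n x_i W_{i,j} \le s_n \max_j \sum_{i=1}^n W_{i,j} \le s_n C_{\text{max}}$ for any $x^n \in [s_n]^n$. Plugging $\beta_{\text{Lip}}$ and $C_{\text{max}}$ directly into the conclusion of Lemma~\ref{lemma:Adapted_Lemma_10} produces exactly
\begin{align}
\mathbb{P}\!\left[f_{x^n}(Z^m) < I(Z^m;X^n=x^n) - n\delta \,\Big|\, X^n=x^n\right] \;\le\; \exp\!\left[-n\,\frac{g_n\,\delta^2}{19\,C_{\text{max}}\,r_n\,s_n\,(\log s_n - \log \tau_n)^2}\right],
\end{align}
which is the claimed inequality, valid for $\delta$ in the same range $(0,(g_n/r_n)s_n)$ as in Lemma~\ref{lemma:Adapted_Lemma_10}.

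The main obstacle here is really just bookkeeping: one must be careful that the $\max_{i,j}/\min_{i,j}$ written in the proof of Lemma~\ref{lemma:Adapted_Lemma_10} can be replaced by the per-column $\kappa_j$ that the well-conditioning hypothesis actually controls, and that the replacement $C \to C_{\text{max}}$ is compatible with the hypotheses of the Bobkov--Ledoux inequality (Lemma~27 in the appendix), i.e.\ that the requirement $\beta_{\text{Lip}} = o(\sqrt{n})$ used downstream still holds, which it does by the standing assumption $-\log \tau_n = o(\sqrt{n})$ together with $s_n = o(\sqrt{n})$. Once these identifications are in place, the lemma follows by direct substitution, and no new concentration argument is required.
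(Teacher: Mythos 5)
Your proposal is correct and matches the paper's own (one-line) proof, which likewise obtains the lemma by direct substitution of the well-conditioning parameters of Definition~\ref{def:well_conditioned} into Lemma~\ref{lemma:Adapted_Lemma_10}. Your additional observation---that the discrete-derivative bound really only requires the per-column condition number $\kappa_j(W)\le 1/\tau_n$ rather than the global $\max_{i,j}/\min_{i,j}$ ratio written in the proof of Lemma~\ref{lemma:Adapted_Lemma_10}---is a useful clarification that the paper glosses over, but it does not change the argument.
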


\begin{proof}
    The proof is imidiate when using Lemma \ref{lemma:Adapted_Lemma_10} with definition \ref{def:well_conditioned}.
\end{proof}

Assumption for the Remainder of this Work: Throughout this analysis, we assume that the transition matrix $W$ is well-conditioned with some fixed parameters $\tau_n > 0, \eta >0$ and a constant $C_{\text{max}}$. This ensures that all concentration bounds remain meaningful and that the achievability proof maintains its probabilistic guarantees.

Now, we will adjust \cite[Lemma~27]{gerzon2025capacity}.

\subsubsection*{Instantiation of the Bobkov-Ledoux Parameters}

In our noisy-$W$ setting, \cite[Lemma~27]{gerzon2025capacity} (Bobkov-Ledoux) applies to $f_{x^n}(Z^m)$ with the following choices:

(i) Poisson scale:
We must define a new maximal effective support bound, $s_n^*$, that bounds the parameter for all $j$:
    \begin{align}
        s_n^* = s_n \cdot \max_{j \in [m]} \left( \sum_{i=1}^n W_{ij} \right).
    \end{align}
\begin{align}
\bar\lambda
&=\max_{j=1,\dots,m}\lambda_j
= \max_{j}\Bigl(\frac{r_n}{g_n}\sum_{i=1}^n x_iW_{i,j}\Bigr)
= \frac{r_n}{g_n}\,s_n^*\leq \frac{r_n}{g_n}\,s_n C_{\text{max}},
\end{align}

(ii) Lipschitz semi-norm:
\begin{align}
\beta
&=\max_{j,z^m}
\bigl|f_{x^n}(z^m+e^{(j)})-f_{x^n}(z^m)\bigr|
\le \log s_n + \log\frac{\max_{i,j}W_{i,j}}{\min_{i,j:W_{i,j}>0}W_{i,j}}
= \log s_n-\log\tau,
\end{align}

(iii) Deviation:
\begin{align}
\delta
&\in\Bigl(0,\frac{g_n}{r_n}s_n C_{\text{max}}\Bigr)
\end{align}
with a change in $\delta$ since we can allow this change (nonetheless, the $\delta$ of the noiseless case \cite{gerzon2025capacity} still holds, but we relaxed the demand).

Substituting into the Poisson concentration bound, based on the unchaged \cite[Lemma 27]{gerzon2025capacity} yields
\begin{align}
\Pr\bigl[f_{x^n}(Z^m)-I(Z^m;X^n=x^n)<-n\delta\mid X^n=x^n\bigr]
\le \exp\!\Bigl(-n\cdot\frac{\delta^2}{16\,\beta^2\,\bar\lambda + 3\,\beta\,\delta}\Bigr), 
\end{align}
with $\bar\lambda=\frac{r_n}{g_n}s_n$, $\beta=\log s_n-\log\tau$, and $\delta\in(0,\frac{g_n}{r_n}s_n)$.

We wish to make sure that the exponent converges to zero, as shown in section \ref{sec:requirements_on_W}.

\section{Adjusting Lemma 11 via Talagrand’s Inequality}

\subsection{Concentration of the Conditional Entropy}

We next address the concentration of the expected information density over the random choice of the codeword $X^n$. Specifically, we focus on the conditional entropy term. Let $H_{\text{Poiss}}(\lambda)$ denote the entropy of a Poisson random variable with parameter $\lambda$. We define the conditional entropy function for a given input $x^n$ as:
\begin{align}
h(x^n) &:= \sum_{j=1}^{m} H(Z_j | X^n = x^n) \nonumber \\
&= \sum_{j=1}^{m}H_{\text{Poiss}}\left(\frac{r_{n}}{g_{n}}\sum_{i=1}^{n}W_{ij}x_{i}\right). \label{eq:h_xn_def}
\end{align}
We establish that $h(X^n)$ concentrates sharply around its mean when $X^n$ is chosen i.i.d. from the input distribution.

We will now adapt \cite[Lemma 11]{gerzon2025capacity}
\begin{lemma}[Adapted Lemma 11: Concentration of Conditional Entropy]
\label{lem:Concentration_of_Poisson_entropy}
Let $X^{n}=(X_{1},\ldots, X_{n})$ be IID random variables with support in $[s_{n}]=\{1,2,\ldots,s_{n}\}$. There exists a numerical constant $c>0$ such that for any $\delta>0$:
\begin{equation}
\mathbb{P}\left[\left|h(X^{n})-\mathbb{E}[h(X^{n})]\right|\geq n\delta\right]\leq\exp\left[-\frac{c\delta^{2}n^{2}}{\left(\frac{r_{n}}{g_{n}}\right)^{2}m^{2}}\right].
\label{eq:concentration_poisson_no_assumption}
\end{equation}
Furthermore, if there exists $\eta>0$ such that for all $j \in [m]$, $\sum_{i=1}^{n}W_{ij}\geq n^{-\eta}$, then the bound can be tightened to:
\begin{equation}
\mathbb{P}\left[\left|h(X^{n})-\mathbb{E}[h(X^{n})]\right|\geq n\delta\right]\leq\exp\left[-\frac{c\eta\delta^{2}n}{\left(\frac{r_{n}}{g_{n}}\right)^{2}s_{n}^{2}\log^{2}n}\right].
\label{eq:concentration_poisson_min_prob}
\end{equation}
\end{lemma}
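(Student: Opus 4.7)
The plan is to apply Talagrand's Gaussian concentration inequality for convex Lipschitz functions on a product of bounded intervals (\cite[Corollary 4.23]{van2014probability}) to $-h$, after rescaling the inputs from $\{1,\ldots,s_n\}^n$ into $[0,1]^n$. Two ingredients are needed: concavity of $h$ (so that $-h$ is convex), and a sharp bound on the Euclidean gradient $\|\nabla h\|_2$. For the first ingredient, since $\lambda_j(x^n)=\tfrac{r_n}{g_n}\sum_i W_{ij}x_i$ is linear in $x^n$, it suffices to establish that $H_{\text{Poiss}}(\lambda)$ is concave. Following the outline in Section~\ref{sec:proof_main}, I would invoke the Poisson forward-difference identity of \cite{adell2010sharp} to derive $|H'_{\text{Poiss}}(\lambda)|\le \log(1+1/\lambda)$, then apply the same identity once more to $H'_{\text{Poiss}}$ itself to conclude $H''_{\text{Poiss}}(\lambda)\le -e^{-\lambda}/\lambda<0$. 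Strict concavity of $H_{\text{Poiss}}$ then transfers to $h(x^n)=\sum_j H_{\text{Poiss}}(\lambda_j(x^n))$ as a sum of concave-linear compositions.

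For the gradient bound, the chain rule gives
\begin{equation}
\frac{\partial h(x^n)}{\partial x_k}=\frac{r_n}{g_n}\sum_{j=1}^m W_{kj}\,H'_{\text{Poiss}}(\lambda_j(x^n)).
\end{equation}
Under the column-sum hypothesis $\sum_{i=1}^n W_{ij}\ge n^{-\eta}$ and using $X_i\ge 1$, each $\lambda_j$ is bounded below by $\tfrac{r_n}{g_n}n^{-\eta}$, so $|H'_{\text{Poiss}}(\lambda_j)|\le \log(1+n^\eta g_n/r_n)=O(\eta\log n)$ whenever $r_n/g_n=\Theta(1)$. Combined with the row-stochastic identity $\sum_j W_{kj}=1$, this yields $|\partial_k h|=O((r_n/g_n)\eta\log n)$ coordinate-wise and $\|\nabla h\|_2^2=O(n(r_n/g_n)^2\eta^2\log^2 n)$ overall. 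I would then pass to $Y_i=X_i/s_n\in(0,1]$ and $\tilde h(y^n)=-h(s_n y^n)$; the latter is convex on $[0,1]^n$ with $\|\nabla\tilde h\|_2^2=s_n^2\|\nabla h\|_2^2$. Talagrand's inequality then furnishes sub-Gaussian concentration of $\tilde h(Y^n)$ with variance proxy $\|\nabla\tilde h\|_2^2$, and expressing the deviation in the original scale reproduces the tail \eqref{eq:concentration_poisson_min_prob}.

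The hardest step will be obtaining quantitative control of $H'_{\text{Poiss}}$ near $\lambda=0$, since the natural bound $\log(1+1/\lambda)$ diverges. This is precisely what forces the column-sum lower bound in Definition~\ref{def:well_conditioned}: without it, the $\log n$ scaling is lost, and I would instead fall back on the crude bound $|H_{\text{Poiss}}(\lambda_j)|=O(\log n)$ (valid because $\lambda_j\le (r_n/g_n)s_n n$) fed into a McDiarmid bounded-differences argument across the $m$ output coordinates, which produces the coarser denominator $\propto m^2$ appearing in \eqref{eq:concentration_poisson_no_assumption}. A minor bookkeeping subtlety is the rescaling onto $[0,1]^n$, which is the mechanism that injects the factor $s_n^2$ into the final variance proxy and ultimately drives the $\beta>2/(3\log|\mathcal{A}|)$ constraint in the DNA application.
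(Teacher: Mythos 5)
Your route for the tightened bound \eqref{eq:concentration_poisson_min_prob} is exactly the paper's: concavity of $H_{\text{Poiss}}$ via two applications of the Poisson forward-difference identity of \cite{adell2010sharp} (giving $H''_{\mathrm{Poiss}}(\lambda)\le -e^{-\lambda}/\lambda$), the coordinate-wise gradient bound $|\partial_k h|\le (r_n/g_n)\,O(\eta\log n)$ from the column-sum lower bound together with row-stochasticity, the rescaling onto $[0,1]^n$ that injects the $s_n^2$ factor into the variance proxy, and Talagrand's convex-concentration inequality \cite[Corollary 4.23]{van2014probability} applied to $-h$. That part is correct and essentially identical to the paper.

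The one genuine discrepancy is your fallback for the coarse bound \eqref{eq:concentration_poisson_no_assumption}. You propose McDiarmid's bounded-differences inequality with the crude per-coordinate increment $O(m\log(\cdot))$; this yields a tail of the form $\exp\bigl[-c\,(n\delta)^2/(n\,m^2\,\mathrm{polylog})\bigr]=\exp\bigl[-c\,\delta^2 n/(m^2\,\mathrm{polylog})\bigr]$, i.e.\ only $n$ in the numerator of the exponent, whereas the lemma asserts $n^2$. The paper instead stays within the Talagrand framework and controls the gradient in the absence of the column-sum hypothesis by a ``slice'' argument---maximizing the convex gradient norm over the vertices $u^n\in\{0,1\}^n$---to obtain $\|\nabla\tilde h\|_2^2\le 8m^2$ \emph{without} the $s_n^2\log^2 n$ blow-up, which then produces the stated $\exp[-c\delta^2 n^2/((r_n/g_n)^2 m^2)]$. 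So your plan as written does not recover \eqref{eq:concentration_poisson_no_assumption}; since only \eqref{eq:concentration_poisson_min_prob} is actually fed into Proposition~\ref{prop:Feinstein_Adapted}, this does not affect the downstream result, but it is a gap relative to the lemma as stated.
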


\begin{proof}
The proof relies on Talagrand’s concentration inequality for convex Lipschitz functions. We proceed in three steps: (1) establishing the convexity of $-h(x^n)$, (2) bounding the gradient of $h(x^n)$, and (3) applying Talagrand's inequality.

\paragraph*{Step 1: Convexity}
First, we analyze the Poisson entropy function $H_{\text{Poiss}}(\lambda)$. Recall the Poisson forward-difference identity \cite{adell2010sharp}: $\frac{\mathrm{d}}{\mathrm{d}\lambda}\mathbb{E}[\phi(N_{\lambda})]=\mathbb{E}[\phi(N_{\lambda}+1)-\phi(N_{\lambda})]$ for $N_\lambda \sim \text{Poisson}(\lambda)$. Applying this to the entropy definition $H_{\mathrm{Poiss}}(\lambda)=\lambda(1-\log\lambda)+\mathbb{E}[\log(N_{\lambda}!)]$, we obtain the first derivative:
\begin{equation}
\frac{\mathrm{d}}{\mathrm{d}\lambda}H_{\mathrm{Poiss}}(\lambda)=\mathbb{E}\bigl[\log(N_{\lambda}+1)\bigr]-\log\lambda = \mathbb{E}\left[\log\frac{N_{\lambda}+1}{\lambda}\right].
\label{eq:H_poiss_deriv}
\end{equation}
Differentiating again using the forward-difference identity with $\phi(k) = \log(k+1)$:
\begin{align}
\frac{\mathrm{d}^{2}}{\mathrm{d}\lambda^{2}}H_{\mathrm{Poiss}}(\lambda) &= \mathbb{E}\bigl[\log(N_{\lambda}+2)-\log(N_{\lambda}+1)\bigr]-\frac{1}{\lambda} \nonumber \\
&= \mathbb{E}\left[\log\left(1+\frac{1}{N_{\lambda}+1}\right)\right]-\frac{1}{\lambda}.
\end{align}
Using the inequality $\log(1+u) \le u$, we find:
\begin{equation}
\frac{\mathrm{d}^{2}}{\mathrm{d}\lambda^{2}}H_{\mathrm{Poiss}}(\lambda) \le \mathbb{E}\left[\frac{1}{N_{\lambda}+1}\right]-\frac{1}{\lambda} = \frac{1-e^{-\lambda}}{\lambda} - \frac{1}{\lambda} = -\frac{e^{-\lambda}}{\lambda} < 0.
\end{equation}
Thus, $\lambda \mapsto H_{\text{Poiss}}(\lambda)$ is strictly concave, and consequently, $-h(x^n)$ is a convex function of the input vector $x^n$ (viewed as a vector in $\mathbb{R}^n$).

\paragraph*{Step 2: Gradient Bound}
To apply concentration results, we re-parameterize the domain to $[0,1]^n$ using $u_{i}:=\frac{x_{i}-1}{s_{n}-1}$. Let $\tilde{h}(u^n) = h(x^n(u^n))$. The squared Euclidean norm of the gradient is:
\begin{equation}
\|\nabla \tilde{h}(u^{n})\|_{2}^{2} = \sum_{k=1}^{n}\left|\frac{\partial \tilde{h}}{\partial u_{k}}\right|^{2} \le \left(\frac{r_{n}s_{n}}{g_{n}}\right)^{2}\sum_{k=1}^{n}\left[\sum_{j=1}^{m}H_{\mathrm{Poiss}}'\left(\lambda_j(u^n)\right) W_{kj}u_{k}\right]^{2}.
\end{equation}
Using Jensen's inequality on \eqref{eq:H_poiss_deriv}, we bound the derivative $H_{\text{Poiss}}'(\lambda) \le \log(1+1/\lambda)$.
For the general case (Equation \ref{eq:concentration_poisson_no_assumption}), a "slice" argument—maximizing the convex gradient norm over vertices $u^n \in \{0,1\}^n$—yields a bound $\|\nabla \tilde{h}\|_2^2 \le 8m^2$.

For the restricted case (Equation \ref{eq:concentration_poisson_min_prob}), we assume $\sum_{i=1}^{n}W_{ij}\geq n^{-\eta}$. This implies $\lambda_j(u^n) \ge \frac{r_n}{g_n} n^{-\eta}$. For sufficiently large $n$, the derivative is bounded by $\log(1 + n^{\eta}) \approx \eta \log n$. Substituting this into the gradient expression:
\begin{align}
\|\nabla \tilde{h}(u^{n})\|_{2}^{2} &\le \left(\frac{r_{n}s_{n}}{g_{n}}\right)^{2} \sum_{k=1}^{n} \left[ (\eta \log n) \sum_{j=1}^m W_{kj} \right]^2 \nonumber \\
&\le \left(\frac{r_{n}s_{n}}{g_{n}}\right)^{2} n (\eta \log n)^2.
\end{align}
To obtain a decaying probability, we require the exponent in the concentration bound to diverge. Based on the variance proxy derived above, this necessitates that:
\begin{equation}
    s_n^2 n \log^2 n = o(n^2),
\end{equation}
which implies $s_n^2 = o(n)$ (ignoring logarithmic factors). Recalling that $s_n = g_n^{1+\rho} = n^{\frac{(1-\beta)(1+\rho)}{\beta}}$ for an arbitrarily small constant $\rho$, this condition requires:
\begin{equation}
    2 \frac{1-\beta \log|\mathcal{A}|}{\beta \log|\mathcal{A}|} < 1 \iff 2 - 2\beta \log|\mathcal{A}| < \beta \log|\mathcal{A}| \iff \beta > \frac{2}{3\log|\mathcal{A}|}.
\end{equation}
Thus, valid concentration is obtained whenever $\beta \in \left(\frac{2}{3\log|\mathcal{A}|}, \frac{1}{\log|\mathcal{A}|}\right)$.
Here we used the fact that $W$ has bounded row sums to bound the inner sum. This simplifies to a variance proxy of order $\Theta(s_n^2 n \log^2 n)$.

\begin{remark}
    if $W$ is doubly-stochastic, the result can be improved for the whole range $\beta \in \left(\frac{1}{2\log|\mathcal{A}|}, \frac{1}{\log|\mathcal{A}|}\right)$
\end{remark}

\paragraph*{Step 3: Talagrand's Inequality}
We invoke Talagrand’s concentration inequality for convex Lipschitz functions (e.g., Corollary 4.23 in \cite{van2014probability}): If $f: [0,1]^n \to \mathbb{R}$ is convex and $\|\nabla f\|_2 \le L$, then $f$ is $L^2$-sub-Gaussian.
Combining the convexity from Step 1 and the gradient bounds from Step 2, we obtain the concentration inequalities stated in the Lemma.
\end{proof}

\begin{remark}
The condition $\sum_{i=1}^{n}W_{ij}\geq n^{-\eta}$ is satisfied for well-conditioned matrices as defined in Definition \ref{prop:Concentration_Under}. Specifically, if $W$ is well-conditioned with parameter $\delta$, then for sufficiently large $n$, the column sums are bounded away from zero, satisfying the condition with $\eta=0$ or a small constant, ensuring the tighter concentration bound applies.
\end{remark}

\subsection{Adjusting  \cite[Prop. 7]{gerzon2025capacity} - Modification of the mutual information}

\begin{definition}
    For a given transition matrix, $W$, we will denote the mutual information of $I(X^n;Y^m)$ as $I(W)$, where the transition matrix dictates the channel's model.
\end{definition}

\begin{definition}
\label{def:loss_mutual_info}
    The loss of maximal mutual information is defined as:
    \begin{align}
        L = \max I(W) - \max I(W_I).
    \end{align}
\end{definition}

\begin{theorem}
    \label{thm:capacity_loss}
    The loss of maximal mutual information due to noise is bounded by:
    \begin{align}
        L \leq -\log |\det(W)|.
    \end{align}
\end{theorem}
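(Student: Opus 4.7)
The plan is to deduce this statement from the per-distribution identity for Poissonized mutual informations that is already established inside the proof of Theorem~\ref{thm:main_noisy_achievability}. That analysis shows that, for every admissible input distribution $P_{X^n}$,
\begin{align}
I(X^n; Z^m_W) \;\geq\; I(X^n; Z^n_I) \;+\; \tfrac{1}{2}\log\det(W W^\top) \;-\; o_n(1),
\end{align}
where $Z^m_W$ and $Z^n_I$ are the Poissonized outputs under the noisy kernel $W$ and under the identity kernel $W_I$. Since $\tfrac{1}{2}\log\det(W W^\top) = \log|\det W|$ whenever $W$ is square and invertible, substituting any capacity-achieving distribution $P^*_{X^n}$ for the noiseless channel yields
\begin{align}
\max I(W) \;\geq\; I(P^*_{X^n}; W) \;\geq\; \max I(W_I) \;+\; \log|\det W| \;-\; o_n(1),
\end{align}
and rearranging gives $\max I(W_I) - \max I(W) \leq -\log|\det W|$ up to vanishing terms, which is the intended content of the theorem once $L$ is read as the magnitude of capacity loss.

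\textbf{Establishing the per-distribution identity.} I would reuse the decomposition $I(X^n;Z)=H(Z)-H(Z\mid X^n)$ for both $Z=Z^m_W$ and $Z=Z^n_I$. The conditional entropies are additive sums of Poisson entropies of the per-coordinate intensities; in the regime $r_n=\Theta(g_n)$ with $g_n\to\infty$, the Gaussian expansion $H_{\text{Poiss}}(\lambda) = \tfrac{1}{2}\log(2\pi e\lambda) + O(1/\lambda)$ causes the difference of the two conditional entropies to collapse to a logarithmic correction in the column structure of $W$, which is $o(1)$ under the well-conditioning of Definition~\ref{def:well_conditioned}. For the output entropy I would invoke \cite[Proposition~11]{lapidoth2008capacity}, which in the high-intensity regime identifies $H(Z)$ with the differential entropy of the continuous intensity vector. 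The Jacobian identity $h(W X) = h(X) + \log|\det W|$ (and its rectangular full-rank generalization $h(X) + \tfrac{1}{2}\log\det(W W^\top)$ via \cite[Theorem~8.6.4]{cover1991elements}) then produces exactly the required determinant term when the two decompositions are subtracted.

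\textbf{Main obstacle.} The hardest part is controlling the approximation errors so that the determinant term dominates: the Lapidoth-style lower bound on $H(Z)$ has to be tight to $o(1)$, and the Gaussian expansion of $H_{\text{Poiss}}$ has to hold uniformly over the typical intensities. These are precisely the technical points that drive the well-conditioned assumption and the scaling conditions $r_n=\Theta(g_n)$, $n=\Omega(g_n^{1+\zeta})$, and $-\log\tau_n = o(\sqrt{n})$ in Theorem~\ref{thm:main_noisy_achievability}, and can be reused verbatim here. A subsidiary check is that any capacity-achieving $P^*_{X^n}$ for the noiseless channel lies in the admissible constraint set of the noisy analysis; since both settings impose the same sum-constrained integer composition structure $\sum_i x_i = n g_n$, this verification is immediate, and the conclusion follows.
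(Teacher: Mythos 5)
Your overall architecture matches the paper's: the statement is meant to follow from the per-distribution inequality $I(X^n;Z^m_W)\geq I(X^n;Z^n_I)+\log|\det W|-o_n(1)$ (Theorem~\ref{thm:noisy_achievability}), which is proved via the decomposition $I=H(Z)-H(Z\mid X^n)$, the Lapidoth-style lower bound on the output entropy by the differential entropy of the intensity vector, and the Jacobian identity $h(WX)=h(X)+\log|\det W|$; evaluating at a noiseless-optimal input and rearranging is exactly how the paper controls $L$ of Definition~\ref{def:loss_mutual_info}. So the route is not genuinely different — but one step of your justification has a real gap.

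The gap is the conditional-entropy comparison. You claim the Gaussian expansion $H_{\text{Poiss}}(\lambda)=\tfrac12\log(2\pi e\lambda)+O(1/\lambda)$ makes $H(Z^m_W\mid X^n)-H(Z^n_I\mid X^n)$ an $o(1)$ quantity. That difference is, to leading order, $\tfrac12\sum_{j=1}^m\log\lambda_j(Wx)-\tfrac12\sum_{i=1}^n\log\lambda_i(x)$, and this is not small. When $m>n$ (the erasure example has $m=(|\mathcal{A}|+1)^L$ versus $n=|\mathcal{A}|^L$) there are $m-n$ extra terms, each of order $\tfrac12\log(2\pi e r_n)$. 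Even for square doubly stochastic $W$, concavity of the logarithm gives $\sum_j\log\bigl(\sum_i W_{ij}\lambda_i\bigr)\geq\sum_i\log\lambda_i$ with a Jensen gap that is $\Theta(n)$ for, e.g., the symmetric substitution kernel with constant $p$: mixing by $W$ strictly increases the per-coordinate Poisson entropies. The paper handles this step with a one-sided claim instead (Theorem~\ref{thm:conditional_entropy_loss}, asserting $H(\tilde Z^m\mid X^n)\leq H(Z^n\mid X^n)$ by data processing), not by cancellation of expansions. To repair your argument you need either that one-sided bound or to show that the increase in $H(Z^m_W\mid X^n)$ is offset by a matching improvement in the lower bound on $H(Z^m_W)$ beyond the bare $\tfrac12\log\det(WW^\top)$ term; as written, your per-distribution inequality does not follow. (A secondary caveat: for $m>n$ the vector $WX$ is supported on an $n$-dimensional subspace of $\mathbb{R}^m$, so $h(WX)$ must be read with respect to the induced measure on that subspace before $\tfrac12\log\det(WW^\top)$ can be invoked — a subtlety your proposal, like the paper, leaves implicit.)
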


\begin{proof}
    The proof follows from the data processing inequality and properties of mutual information, as shown next, as well as a previous result shown in \cite{gerzon2025capacity} using \cite{lapidoth2008capacity}.\\
    First, we will use Theorem \ref{thm:log_det_loss_of_entropy} regarding the change of entropy of $Z$ due to the noisy matrix $W$.\\
    Then, using Lemma \ref{lem:vector_conditional_entropy} and Theorem \ref{thm:conditional_entropy_loss} for the \emph{conditional entropy}.
\end{proof}

\begin{lemma}
    \label{lemma:Data_procces_noise}
    \begin{align}
    \label{eq:Data_procces_noise}
    D\left(Q^n||Q_G^n\right) \geq D\left(\left(WQ\right)^m||\left(WQ_G\right)^m\right) \geq D\left(\tilde{R^m}||\tilde{R_G^m}\right).
\end{align}
\end{lemma}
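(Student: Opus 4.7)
The plan is to deduce both inequalities as consecutive instances of the data processing inequality (DPI) for Kullback--Leibler divergence, which states that for any Markov kernel $T$ from one measurable space into another, and for any two probability distributions $P, P'$ on the source space, $D(P \| P') \ge D(TP \| TP')$. Because each inequality in the statement concerns divergences before and after the application of a fixed, distribution-independent stochastic transformation, no new concentration or entropy estimate is needed; the whole argument is conceptual, and the only content is correctly identifying the two kernels at play.

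For the left inequality, I would view the transition matrix $W$ as defining a Markov kernel from $n$-dimensional distributions to $m$-dimensional ones: given an input count (or intensity) vector distributed according to $Q^n$ or $Q_G^n$, passing each coordinate independently through $W$ produces $m$-dimensional outputs distributed as $(WQ)^m$ and $(WQ_G)^m$, respectively. Since $W$ is applied in exactly the same way to both inputs, a single invocation of DPI yields
\begin{equation}
D(Q^n \| Q_G^n) \;\ge\; D\bigl((WQ)^m \,\|\, (WQ_G)^m\bigr).
\end{equation}

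For the right inequality, I would identify the map $(WQ)^m \mapsto \tilde R^m$ (and the corresponding map for the Gaussian reference) as a further stochastic post-processing step, namely the Poisson-sampling/quantization operation that converts the intermediate continuous intensity distribution into the integer-valued output distribution $\tilde R^m$ appearing in the Feinstein-bound argument. Because this post-processing is specified by the same kernel for both $(WQ)^m$ and $(WQ_G)^m$, a second application of DPI gives
\begin{equation}
D\bigl((WQ)^m \,\|\, (WQ_G)^m\bigr) \;\ge\; D\bigl(\tilde R^m \,\|\, \tilde R_G^m\bigr),
\end{equation}
and chaining the two bounds establishes the lemma.

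The main obstacle, and the only place where care is required, is verifying that the second map is genuinely a distribution-free Markov kernel applied identically on both sides, rather than a transformation that depends on which of $(WQ)^m$ or $(WQ_G)^m$ it is acting on. Once this is checked (together with routine measurability of the kernels on the relevant integer lattice), the proof reduces to a clean two-step invocation of DPI. Any residual subtleties, such as the absolute continuity of $Q^n$ with respect to $Q_G^n$, are absorbed into the standard convention that $D(\cdot \| \cdot) = +\infty$ whenever the supporting condition fails, in which case the inequalities hold trivially.
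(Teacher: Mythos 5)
Your proposal is correct and matches the paper's argument essentially verbatim: the paper also proves the lemma by two successive applications of the data processing inequality, first for the linear transformation $W$ viewed as a processing channel, and then for the Poisson measurement channel mapping the intensity vector to the integer-valued output $\tilde{R}^m$. No further comparison is needed.
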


\begin{proof}
We rigorously establish the relative entropy inequality through two successive applications of the data processing inequality (DPI):

1. Transition Matrix Processing:
   Consider the Markov chain $\tilde{R}^m \leftarrow (WQ)^m \leftarrow Q^n \leftarrow Q_G^n \rightarrow (WQ_G)^m \rightarrow \tilde{R}_G^m$. 
   
   By the DPI for relative entropy \cite[Lemma 3.11]{csiszar2011information}:
   \begin{align}
       D\left(Q^n\|Q_G^n\right) \geq D\left((WQ)^m\|(WQ_G)^m\right)
   \end{align}
   
   This follows because the linear transformation $W$ constitutes a valid processing channel \cite[Theorem 8.6.4]{cover1991elements}.

2. Poisson Channel Processing:
   The Poisson measurement channel $\tilde{R}^m|(WQ)^m$ represents another processing step. Applying DPI again:
   \begin{align}
       D\left((WQ)^m\|(WQ_G)^m\right) \geq D\left(\tilde{R}^m\|\tilde{R}_G^m\right)
   \end{align}
   
   This is justified by the Poisson channel's information-preserving properties established in \cite[Appendix A]{lapidoth2008capacity}, using similar notation.

Combining both inequalities yields the required result:
\begin{align}
    D\left(Q^n\|Q_G^n\right) \geq D\left((WQ)^m\|(WQ_G)^m\right) \geq D\left(\tilde{R}^m\|\tilde{R}_G^m\right)
\end{align}
\end{proof}

\begin{theorem}
    \label{thm:log_det_loss_of_entropy}
    For an input vector $X^n$, a transition matrix $W$ and an output vector $\tilde{Z}^m$, the entropy of the output is bounded as follows:
    \begin{align}
    \label{eq:alap_smos_Proposition_11_vector_noise}
    H(\tilde{Z}^m)\geq - \log(|W|) + h(X^n)+ \sum_{i=1}^n (1 + r_n) \log \left( 1 + \frac{1}{r_n} \right) - 1.
\end{align}
\end{theorem}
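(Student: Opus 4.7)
The plan is to adapt the duality-based argument of Lapidoth--Moser for Poisson channels \cite[Proposition 11]{lapidoth2008capacity} to the vector setting with linear pre-transformation $X^{n}\mapsto WX^{n}$. The central vehicle is Lemma \ref{lemma:Data_procces_noise}, which provides the chain $D(Q^{n}\|Q_{G}^{n})\geq D(\tilde{R}^{m}\|\tilde{R}_{G}^{m})$ for any auxiliary continuous reference distribution $Q_{G}^{n}$. Expanding both extremes of this chain in entropy form and rearranging will isolate $H(\tilde{Z}^{m})$ on one side and $h(X^{n})$ on the other, with the matrix penalty and Poisson correction emerging as book-keeping constants.

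First I would choose $Q_{G}^{n}$ so that its image $WQ_{G}^{n}$ consists of independent exponential coordinates, with rate calibrated so that the associated Poisson output $\tilde{R}_{G}^{m}$ is a product of independent geometric random variables of mean $r_{n}$. This choice exploits the classical identity that a Poisson variable with exponential intensity is geometric, yielding a closed-form product reference $\tilde{R}_{G}^{m}$ with per-coordinate entropy of the form $(1+r_{n})\log(1+1/r_{n})+\log r_{n}$. Concretely, for square invertible $W$, this corresponds to $Q_{G}^{n}=W^{-1}E^{n}$ with $E^{n}$ i.i.d.\ exponential; for rectangular $W$ one either restricts to the column-space of $W$ or uses a limiting infinitesimal-smoothing argument, replacing $|\det W|$ by $\sqrt{\det(WW^{\top})}$.

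Next I would expand both ends of the DPI. On the input side, $D(Q^{n}\|Q_{G}^{n})=-h(X^{n})+\Expect[-\log f_{Q_{G}}(X^{n})]$; the density $f_{Q_{G}}$ picks up a Jacobian factor $|\det W|^{-1}$ from the change of variables, which is precisely the source of the $-\log|W|$ term. On the output side, $D(\tilde{R}^{m}\|\tilde{R}_{G}^{m})=-H(\tilde{Z}^{m})+\Expect[-\log P_{\tilde{R}_{G}}(\tilde{Z}^{m})]$; substituting the geometric PMF for $\tilde{R}_{G}$, the expected linear-in-count contribution cancels against the exponential mean-matching term on the input side, since the Poisson channel preserves the expected intensity. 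What remains per coordinate is the $(1+r_{n})\log(1+1/r_{n})$ piece of the geometric entropy, summed over the $n$ intensity directions determined by $W$. The residual $-1$ slack then appears as a simple Poisson--Jensen log-concavity estimate used to bridge the small integrality gap between $\log P_{\tilde{R}_{G}}(\tilde{Z}^{m})$ and its expected linearization.

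The main obstacle is the matrix-transformation step when $W$ is not square and invertible. The differential-entropy change-of-variables identity $h(WX)=h(X)+\log|\det W|$ does not apply directly to rank-deficient or rectangular $W$, and the Jacobian must be replaced by a pseudo-determinant via $\sqrt{\det(WW^{\top})}$, which in general requires either projecting onto the image of $W$ or smoothing $WX^{n}$ by an infinitesimal independent noise and passing to the limit. Carrying the mean-matching cancellation cleanly through this regularization, so that no spurious linear terms survive outside the $-\log|W|$ penalty and the $(1+r_{n})\log(1+1/r_{n})$ contribution is recovered exactly, is the most delicate technical point; once this is handled, the assembly of the remaining terms is essentially bookkeeping over the $n$ independent geometric references.
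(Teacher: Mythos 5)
Your proposal follows essentially the same route as the paper's proof: the DPI chain of Lemma \ref{lemma:Data_procces_noise} with an exponential reference input whose image under $W$ yields independent geometric Poisson outputs, the change-of-variables identity $h(WX)=h(X)+\log|\det W|$ (Cover--Thomas Thm.~8.6.4) producing the $-\log|W|$ penalty, and the expansion of both divergences so that the mean-matching terms cancel and leave the $(1+r_n)\log(1+1/r_n)$ contributions. Your explicit treatment of the rectangular/rank-deficient case via the pseudo-determinant $\sqrt{\det(WW^{\top})}$ and a projection or smoothing limit is in fact more careful than the paper, which simply asserts the non-square case follows analogously.
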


\begin{proof}
    Since $W$ is a transition matrix, and we wish to portray its effect on the differential entropy of $X$, we can use \cite[Theorem 8.6.4]{cover1991elements} to show that
\begin{align}
    h\left(WX\right) = h\left(X\right) + \log\left(|W|\right).
\end{align}
So using Lemma \ref{lemma:Data_procces_noise} will result in
\begin{align}
    \label{eq:Data_procces_vector_noise_differential_log_det}
    D\left(X^n||X_G^n\right) &= \sum_{i=1}^n -h(X_i)+\log r_n + 1 \geq \\
    D\left(\left(WX\right)^m||\left(WX_G\right)^m\right) &= \log(|W|) - \log(|W|) + \sum_{i=1}^n -h(X_i)+\log r_n + 1\geq \\
    D\left(\tilde{Z}^m||\tilde{Z}_G^m\right) &= - \log(|W|) + \sum_{i=1}^m -H(Z_i) + (1 + r_n) \log(1 + r_n) - r_n \log r_n,
\end{align}
where the last transition is due to the effect of \cite[Theorem 8.6.4]{cover1991elements} on continuous variables and the lack of it on discrete variables, for which the entropy is not affected by the transition matrix, but rather remains unchanged.
\end{proof}

In Theorem \ref{thm:log_det_loss_of_entropy}, we complete the bound over the gap in entropy.

In the following, we will look at the gap in $H(Y|X)$.
Starting from \cite[Lemma. 10]{lapidoth2008capacity}, we wish to fit our result to a vector $X^n$.
\begin{lemma}
    \label{lem:vector_conditional_entropy}
    For the noiseless case, an input vector $X^n$ can only reduce the conditional entropy:
    \begin{align}
        H(Z^n|X^n)\leq \frac{1}{2}\log\left(\left(2\pi e\right)^n\left(r_n + \frac{1}{12}\right)^n\right).
    \end{align}
\end{lemma}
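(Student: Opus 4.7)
The strategy is to reduce the vector statement to the scalar Poisson entropy bound of \cite[Lem.~10]{lapidoth2008capacity} by exploiting the special structure of the noiseless channel and then averaging. The key observation is that in the noiseless setting ($W=I_n$), conditioned on $X^n = x^n$, the output coordinates decouple: $Z_i \mid X^n \sim \mathrm{Poisson}\!\left(\tfrac{r_n}{g_n}\, x_i\right)$ independently across $i$. Hence, by the chain rule for entropy,
\begin{align}
H(Z^n \mid X^n) \;=\; \sum_{i=1}^n H(Z_i \mid X_i) \;=\; \sum_{i=1}^n \mathbb{E}_{X_i}\!\left[H_{\mathrm{Poiss}}\!\left(\tfrac{r_n}{g_n}\,X_i\right)\right].
\end{align}

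Next, I would invoke the scalar Lapidoth bound $H_{\mathrm{Poiss}}(\lambda) \le \tfrac{1}{2}\log\!\bigl(2\pi e\,(\lambda + \tfrac{1}{12})\bigr)$ termwise. Two successive applications of Jensen's inequality, using concavity of $\log$, then complete the argument. The first brings the expectation inside the logarithm, yielding $\tfrac{1}{2}\log\!\bigl(2\pi e(\tfrac{r_n}{g_n}\mathbb{E}[X_i] + \tfrac{1}{12})\bigr)$ per coordinate. The second averages across $i \in [n]$, where the total-mean constraint $\sum_{i=1}^n \mathbb{E}[X_i] = n g_n$ gives $\tfrac{1}{n}\sum_i \tfrac{r_n}{g_n}\mathbb{E}[X_i] = r_n$. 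Combining these steps delivers
\begin{align}
H(Z^n \mid X^n) \;\le\; \tfrac{n}{2}\log\!\bigl(2\pi e(r_n + \tfrac{1}{12})\bigr) \;=\; \tfrac{1}{2}\log\!\Bigl((2\pi e)^n\,(r_n + \tfrac{1}{12})^n\Bigr),
\end{align}
as claimed.

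The only real subtlety is that the decoupling $H(Z^n\mid X^n) = \sum_i H(Z_i\mid X_i)$ relies on conditional independence of the Poisson outputs, which holds \emph{only} in the noiseless case $W=I_n$; once $W$ mixes coordinates, the same approach would require retaining a residual dependence term. For the present lemma, however, no additional inequality is needed: the scalar Lapidoth bound already encapsulates the Gaussian-type comparison with the $\tfrac{1}{12}$ discretization correction that accounts for the integer support of $Z_i$, and the remaining work is just two Jensen steps tied together by the input mean constraint $\sum_i \mathbb{E}[X_i]=ng_n$.
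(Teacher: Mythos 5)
Your proof is correct and follows essentially the same route as the paper's: decompose $H(Z^n\mid X^n)$ into per-coordinate conditional entropies, apply the scalar maximum-entropy bound for an integer-valued variable of variance $\lambda$ (the paper cites \cite[Thm.~16.3.3]{cover1991elements}, you cite the equivalent bound from \cite[Lem.~10]{lapidoth2008capacity}), and close using the mean constraint $\sum_i \mathbb{E}[X_i]=ng_n$. If anything, your two explicit Jensen steps are more careful than the paper's terse ``assuming all $r_n$ are equal,'' which elides exactly the averaging you spell out.
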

\begin{proof}
    By adjusting \cite[Theorem 16.3.3 (Theorem 9.7.1)]{cover1991elements}:
    \begin{align}
    H(Z^n|X^n) &\leq \sum_{i=1}^{n}H(Z_i|X_i)\\
    &= \sum_{i=1}^{n} \frac{1}{2}\log\left(\left(2\pi e\right)\left(r_n + \frac{1}{12}\right)\right)\\
    &= \frac{1}{2}\log\left(\left(2\pi e\right)^n\left(r_n + \frac{1}{12}\right)^n\right),
\end{align}
assuming all $r_n$ are equal.
\end{proof}

Now, we wish to check the noisy case.

\begin{theorem}
    \label{thm:conditional_entropy_loss}
    In the noisy case, $H(\tilde{Y^m}|X^n)$  where $\tilde{Y^m}$ is the noisy output, the conditional entropy can only be reduced:
    \begin{align}
        H(\tilde{Y^m}|X^n) \leq H(Y^n|X^n, W = I_{n}).
    \end{align}
\end{theorem}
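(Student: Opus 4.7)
The plan is to realize the noisy Poissonized channel through Poisson thinning so that a Markov chain $X^n \to Y^n \to \tilde Y^m$ becomes available, and then to turn data processing together with row-stochasticity of $W$ into a bound on the sum of Poisson entropies that appears conditionally.

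First, I would unpack the noisy channel (\ref{eq:poisson_model}) as a two-stage cascade. For each input type $i$, the noiseless count $Y_i \sim \mathrm{Poisson}(\mu_i)$ with $\mu_i = \frac{r_n}{g_n}x_i$ is drawn, and each of the $Y_i$ sampled items is independently routed to output label $j$ with probability $W_{i,j}$. By Poisson thinning, the routed counts $N_{ij}$ are independent $\mathrm{Poisson}(\mu_i W_{i,j})$ variables satisfying $Y_i = \sum_j N_{ij}$ and $\tilde Y_j = \sum_i N_{ij}$. Because the second (routing) stage is a stochastic kernel that does not depend on $X^n$, the Markov chain $X^n - Y^n - \tilde Y^m$ holds and the noiseless case corresponds to $W = I_n$, for which $\tilde Y_j = Y_j$ exactly.

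Second, I would use the Markov structure to write $\tilde Y^m$ as a measurable function $g(Y^n, U)$ of $Y^n$ and auxiliary randomness $U$ with $U \perp X^n$, from which
\begin{align*}
H(\tilde Y^m \mid X^n) \le H(g(Y^n,U) \mid X^n) \le H(Y^n \mid X^n) + H(U \mid Y^n, X^n).
\end{align*}
The conditional entropy $H(U \mid Y^n, X^n)$ accounts for the per-sample identification randomness; I would bound it by the entropy of the per-type multinomial$(Y_i, W_{i,\cdot})$ variables and then exploit the constraint $\sum_j W_{i,j} = 1$ together with Lapidoth's upper bound in Lemma \ref{lem:vector_conditional_entropy} to absorb it into the noiseless entropy on the right-hand side. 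The key identity to lean on is that row-stochasticity preserves the total Poisson mass, $\sum_j \lambda_j(x^n) = \sum_i \mu_i(x^n) = n r_n$, so the aggregated "mass" is distributed across the $m$ bins of $\tilde Y^m$ without creating excess uncertainty relative to its distribution across the $n$ bins of $Y^n$.

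Third, using the independent Poisson structure conditional on $X^n$, the claim reduces to the scalar-sum inequality
\begin{align*}
\sum_{j=1}^m H_{\mathrm{Poiss}}(\lambda_j(x^n)) \le \sum_{i=1}^n H_{\mathrm{Poiss}}(\mu_i(x^n)), \qquad \lambda_j(x^n) = \sum_{i=1}^n \mu_i(x^n) W_{i,j}.
\end{align*}
For this, I would combine the Lapidoth upper bound $H_{\mathrm{Poiss}}(\mu) \le \frac{1}{2}\log(2\pi e(\mu + \frac{1}{12}))$ with the concavity of the log, applying Jensen under the mass-preservation constraint. The main obstacle I anticipate is controlling the case where a few $\mu_i$ carry most of the total mass while $W$ spreads them across many output bins, which a naïve Jensen bound mishandles; here the well-conditioning hypothesis on $W$ (Definition \ref{def:well_conditioned}), in particular the lower bound on column sums $n^{-\eta} \le \sum_i W_{ij} \le C_{\text{max}}$, becomes essential to ensure the $\lambda_j$ stay in a regime where $H_{\mathrm{Poiss}}(\lambda_j) \approx \frac{1}{2}\log(2\pi e \lambda_j)$, so that the penalty is absorbed by the $\tfrac{1}{2}\log\det(WW^\top)$ term appearing in the companion entropy bound (Theorem \ref{thm:log_det_loss_of_entropy}).
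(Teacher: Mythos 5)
Your reduction in the third step is the right way to make the statement concrete: conditioned on $X^n=x^n$ both outputs are vectors of independent Poisson variables, so the claim is exactly $\sum_{j=1}^m H_{\mathrm{Poiss}}(\lambda_j) \le \sum_{i=1}^n H_{\mathrm{Poiss}}(\mu_i)$ with $\lambda = W^\top \mu$ and $\sum_j \lambda_j = \sum_i \mu_i$ by row-stochasticity. But the tools you propose for this inequality --- concavity of $H_{\mathrm{Poiss}}$ plus Jensen under mass preservation --- prove the \emph{opposite} inequality. When $W$ is doubly stochastic, $\lambda$ is majorized by $\mu$, and since $\sum_j H_{\mathrm{Poiss}}(\cdot)$ is Schur-concave (the paper itself establishes $H_{\mathrm{Poiss}}''<0$), mixing can only increase the conditional entropy. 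A two-type example makes this explicit: take $W=\tfrac12 J_2$ and $\mu=(10,\,0.01)$; then $\sum_i H_{\mathrm{Poiss}}(\mu_i)\approx 2.6$ nats while $2\,H_{\mathrm{Poiss}}(5.005)\approx 4.5$ nats. Well-conditioning does not rescue this, since the uniform doubly stochastic matrix is as well-conditioned as possible. The same issue undermines your second step: the thinning construction gives $H(\tilde Y^m\mid X^n)\le H(Y^n\mid X^n)+H(U\mid Y^n,X^n)$, but the routing randomness $H(U\mid Y^n,X^n)$ is strictly positive (of order $n$), and the target inequality has no slack into which it can be ``absorbed.''

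The deeper point is that the statement is not a data-processing fact, and your instinct to build the Markov chain $X^n\to Y^n\to\tilde Y^m$ mirrors the paper's own one-line proof, which invokes the DPI. The DPI controls the mutual information, $I(X^n;\tilde Y^m)\le I(X^n;Y^n)$, not the conditional entropy: post-processing by an $X^n$-independent kernel can, and here does, increase $H(\cdot\mid X^n)$ because it injects fresh randomness. So both your argument and the paper's run into the same wall. If a bound of this type is to be used inside Theorem~\ref{thm:noisy_achievability}, it must be an approximate statement exploiting that both sides behave like $\tfrac12\log(2\pi e\lambda)$ in the large-intensity regime, with the discrepancy quantified against the $\tfrac12\log\det(WW^\top)$ correction --- not an exact monotonicity holding conditionally on every $x^n$.
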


\begin{proof}
    This follows immediately from the data processing inequality.
\end{proof}

\subsection{Support}
In the achievability bound for the noiseless case \cite{gerzon2025capacity}, the input distribution $P_X$ was restricted to a finite integer support, $\mathrm{supp}(P_X) \subseteq [s_n] = \{1, 2, \dots, s_n\}$. This truncation and rounding of the ideal Gamma distribution ensured that the input $X_i$ to each Poisson channel was a bounded integer.

In the noisy case, however, the input to the $j$-th Poisson channel is $\lambda_j = \frac{r_n}{g_n} \sum_{i=1}^n X_i W_{ij}$. This new parameter is no longer guaranteed to be an integer, nor does it share the same bound $s_n$.

\begin{definition}[Effective Support Bound]
\label{def:effective_support}
Let $s_n = \lceil g_n^{1+\rho} \rceil$ be the support bound for the input $X_i$ (i.e., $X_i \in [s_n]$) as defined in the noiseless case \cite[Prop. 7, Prop. 9]{gerzon2025capacity}. The input parameter to the $j$-th Poisson channel is $\lambda_j = \frac{r_n}{g_n} \sum_{i=1}^n X_i W_{ij}$.

This new parameter $\lambda_j$ has two key properties that differ from the noiseless case:
\begin{itemize}
    \item Non-integer values: Since the inputs $X_i$ are integers but the matrix weights $W_{ij}$ are real-valued, the resulting parameters $\lambda_j$ are not necessarily integers. They exist on a discrete grid.
    \item New upper bound: The parameter $\lambda_j$ is bounded by the sum of the $j$-th column of $W$.
\end{itemize}
\end{definition}

\begin{remark}
\label{rem:support_modification}
The support bound $s_n$ from \cite{gerzon2025capacity} must be replaced with $s_n^*$ to account for the amplification or attenuation from the channel matrix $W$. The original bound on the input, $X_i \leq s_n$, translates to a bound on the $j$-th Poisson parameter:
\begin{align}
    \lambda_j = \frac{r_n}{g_n} \sum_{i=1}^n X_i W_{ij} \leq \frac{r_n}{g_n} \sum_{i=1}^n s_n W_{ij} = \frac{r_n}{g_n} s_n \left( \sum_{i=1}^n W_{ij} \right) \le \frac{r_n}{g_n}s_n^*.
\end{align}
\end{remark}

\subsection{Final Results}
\begin{theorem}[Adapted Prop. 7 in \cite{gerzon2025capacity}]
\label{thm:noisy_achievability}
    For a transition matrix $W$
    with $\sum_{j=1}^m W_{i,j} = 1$ for all $i$, and assuming $g_n \to \infty$, $cg_n \leq r_n \leq eg_n$ for some $c \in (0,e)$, and $n = \Omega(g_n^{1+\zeta})$ for some $\zeta > 0$:

(i) If $W$ is $n \times n$ (square matrix):
\begin{equation}
    I(\overline{X}^n;\overline{Z}^n)\geq \frac{n}{2} \log r_n -n \Psi \left(\frac{r_n}{g_n}\right) +\log |\det(W)|-o_n\left(1\right)
\end{equation}

(ii) If $W$ is $n \times m$ (not square):
\begin{equation}
    I(\overline{X}^n;\overline{Z}^m)\geq \frac{n}{2} \log r_n -n \Psi \left(\frac{r_n}{g_n}\right) +\frac{1}{2}\log\det\bigl(WW^\top\bigr)-o_n\left(1\right)
\end{equation}
where $\Psi(\mu) = (\mu + 1) \cdot h_{\text{bin}}\left(\frac{1}{\mu+1}\right)$.
\end{theorem}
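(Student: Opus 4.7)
The plan is to adapt the proof of the noiseless mutual-information bound in \cite[Prop.~7]{gerzon2025capacity} by tracking how each piece of the decomposition $I(X^n;Z^m) = H(Z^m) - H(Z^m\mid X^n)$ degrades when the identity channel is replaced by a full-row-rank, row-stochastic $W$. I would retain the same truncated, integer-rounded product-Gamma input distribution $P_{X^n}$ that is optimized in the noiseless analysis, so that the marginal moment structure and the support restriction $X_i \in [s_n]$ match the requirements of Lemmas \ref{prop:Concentration_Under} and \ref{lem:Concentration_of_Poisson_entropy}, and so that the noiseless rate term $\tfrac{n}{2}\log r_n - n\Psi(r_n/g_n)$ reappears essentially verbatim.

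For the conditional entropy $H(Z^m\mid X^n)$, conditional independence of the Poisson outputs given $X^n$ reduces the term to $\sum_{j=1}^m \mathbb{E}[H_{\mathrm{Poiss}}(\lambda_j(X^n))]$ with $\lambda_j = (r_n/g_n)\sum_i X_i W_{ij}$. Applying the upper bound $H_{\mathrm{Poiss}}(\lambda)\le \tfrac{1}{2}\log(2\pi e(\lambda+1/12))$ together with Jensen's inequality (using the concavity of $\lambda\mapsto H_{\mathrm{Poiss}}(\lambda)$ proved in Step~1 of Lemma~\ref{lem:Concentration_of_Poisson_entropy}), and collapsing the column sums via row-stochasticity $\sum_j W_{ij}=1$, I would recover the same $\tfrac{n}{2}\log(2\pi e r_n)+o(n)$ contribution as in the noiseless case; Theorem~\ref{thm:conditional_entropy_loss} moreover certifies by the data-processing inequality (DPI) that the noisy $H(Z^m\mid X^n)$ cannot exceed its noiseless counterpart.

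For the output entropy $H(Z^m)$ I would invoke a vector version of the Lapidoth--Moser Poisson lower bound \cite[Prop.~11]{lapidoth2008capacity}, $H(Z^m)\ge h(\lambda(X^n)) + (\text{universal Poisson additive terms})$, where the intensity vector $\lambda = (r_n/g_n)\,W^\top X$ lies on an $n$-dimensional subspace of $\mathbb{R}^m$. The key linear-algebraic step is the Jacobian identity $h(W^\top X) = h(X) + \tfrac{1}{2}\log\det(WW^\top)$, valid whenever $m\ge n$ and $\operatorname{rank}W=n$, which is a direct consequence of \cite[Thm.~8.6.4]{cover1991elements} and contributes exactly the claimed penalty (collapsing to $\log|\det W|$ in the square case). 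To avoid the subtlety of differential entropy on an $n$-dimensional image of $\mathbb{R}^m$, I would follow the appendix route: introduce a Gaussian comparator $X_G$ with matched covariance and apply Lemma~\ref{lemma:Data_procces_noise} to the chain $Q^n \to (WQ)^m \to \tilde R^m$, reading off the $-\log|\det W|$ (resp.\ $-\tfrac{1}{2}\log\det(WW^\top)$) correction cleanly as in Theorem~\ref{thm:log_det_loss_of_entropy}.

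The main obstacle will be controlling the gap between the discrete, truncated integer input $X^n$ and the continuous comparators that power the differential-entropy identities. In the noiseless proof this quantization gap is $o(n)$ under $n=\Omega(g_n^{1+\zeta})$, but here the rounding propagates through $W^\top$ and the effective support expands from $s_n$ to $s_n^* = s_n\max_j\sum_i W_{ij}$ (Remark~\ref{rem:support_modification}). Keeping this correction absorbable into $o_n(1)$ is precisely the role of the well-conditioning bound $\sum_i W_{ij}\le C_{\max}$ in Definition~\ref{def:well_conditioned}; once verified, the noiseless bound and the additive $\tfrac{1}{2}\log\det(WW^\top)$ (or $\log|\det W|$) combine into the two claimed inequalities. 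A secondary technical wrinkle is justifying the vectorial form of the Lapidoth--Moser lower bound for a dependent Poisson output with non-diagonal mean structure, but the Gaussian-comparator DPI detour sidesteps this by never stepping off $\mathbb{R}^n$, which is why I would prefer it over a direct coordinate-wise application.
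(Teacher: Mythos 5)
Your proposal follows essentially the same route as the paper: the same $I=H(Z^m)-H(Z^m\mid X^n)$ decomposition over the truncated rounded-Gamma input, the Gaussian-comparator relative-entropy chain of Lemma~\ref{lemma:Data_procces_noise} together with the Jacobian identity to extract the $\log|\det W|$ (resp.\ $\tfrac{1}{2}\log\det(WW^\top)$) penalty from the output entropy, and the data-processing comparison of Theorem~\ref{thm:conditional_entropy_loss} for the conditional entropy, all riding on the noiseless Prop.~7 backbone with the $s_n^*\le C_{\max}s_n$ support adjustment. One caution: your primary route for $H(Z^m\mid X^n)$ --- summing $H_{\mathrm{Poiss}}(\lambda_j)\le\tfrac{1}{2}\log\bigl(2\pi e(\lambda_j+1/12)\bigr)$ over all $m$ coordinates and applying Jensen --- does not collapse to $\tfrac{n}{2}\log(2\pi e r_n)+o(n)$ when $m\gg n$ (the additive constants alone contribute $\Theta(m)$, and Jensen over $m$ terms with $\sum_j\lambda_j=nr_n$ gives $\tfrac{m}{2}\log(2\pi e\,nr_n/m)$, which exceeds the target for $n<m<nr_n/e$), so the comparison to the noiseless conditional entropy that you list as a ``moreover'' is in fact the step the paper relies on and the one you need.
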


\begin{proof}
The proof follows by extending the noiseless case analyzed in Propositions 5, 7, and 9, combined with the entropy loss analysis from Theorem \ref{thm:log_det_loss_of_entropy} and Theorem \ref{thm:conditional_entropy_loss}.

\emph{Step 1: Data Processing Chain.}
Consider the Markov chain:
\begin{align}
    X^n \to (WX)^m \to \tilde{Z}^m
\end{align}
where $\tilde{Z}^m$ denotes the Poisson output after the noisy transformation.

\emph{Step 2: Applying Data Processing Inequality.}
By Lemma \ref{lemma:Data_procces_noise}, we have:
\begin{align}
I(X^n; \tilde{Z}^m) &\leq I(X^n; Z^n) \\
&= I(X^n; Z^n) - \Delta I_W
\end{align}
where $\Delta I_W$ represents the loss due to the noisy transformation $W$.

\emph{Step 3: Entropy Loss Analysis.}
From Theorem \ref{thm:log_det_loss_of_entropy}, the differential entropy satisfies:
\begin{align}
    h(WX^n) = h(X^n) + \log|\det(W)|
\end{align}
for square matrices, and 
\begin{align}
    h(WX^n) = h(X^n) + \frac{1}{2}\log\det(WW^\top)
\end{align}
for non-square matrices (using properties from \cite{cover1991elements}).

Since $W$ is a transition matrix, $|\det(W)| \leq 1$ and thus $\log|\det(W)| \leq 0$, representing an entropy reduction.

By Lemma \ref{lemma:Data_procces_noise} and equation \eqref{eq:Data_procces_vector_noise_differential_log_det}:
\begin{align}
H(\tilde{Z}^m) &\geq H(Z^n) + \log|\det(W)| + o_n(1) \quad \text{(square case)}\\
H(\tilde{Z}^m) &\geq H(Z^n) + \frac{1}{2}\log\det(WW^\top) + o_n(1) \quad \text{(non-square case)}
\end{align}

\emph{Step 4: Conditional Entropy.}
By Theorem \ref{thm:conditional_entropy_loss} and the data processing inequality:
\begin{align}
    H(\tilde{Z}^m | X^n) \leq H(Z^n | X^n) + o_n(1)
\end{align}

\emph{Step 5: Combining Results.}
Applying Propositions 5, 7, and 9 with the entropy losses:
\begin{align}
I(X^n; \tilde{Z}^m) &= H(\tilde{Z}^m) - H(\tilde{Z}^m | X^n)\\
&\geq \left[H(Z^n) + \log|\det(W)|\right] - H(Z^n|X^n) + o_n(1)\\
&= I(X^n; Z^n) + \log|\det(W)| + o_n(1)\\
&\geq \frac{n}{2}\log r_n - n\Psi\left(\frac{r_n}{g_n}\right) + \log|\det(W)| + o_n(1)
\end{align}
where the last inequality follows from Proposition 7.

The non-square case follows analogously using $\frac{1}{2}\log\det(WW^\top)$ instead of $\log|\det(W)|$.

Note that since $\log|\det(W)| \leq 0$, this term represents a \emph{loss} in mutual information (the loss is $-\log|\det(W)| \geq 0$), which is consistent with the data processing inequality.
\begin{remark}
    The maximal input amplitude is bounded by $s_n \leq g_n$ to ensure the Poisson parameter $\frac{r_n}{g_n}x_i$ remains well-defined for the concentration inequalities in \cite[Prop. 5]{gerzon2025capacity}.
\end{remark}
\end{proof}

\section{Detailed Analysis of Examples}
\label{sec:example_analysis}

In this section, we provide the detailed derivations for the capacity penalty terms and verify that the matrices satisfy the well-conditioned properties (Definition \ref{def:well_conditioned}) for the three examples introduced in the main text.

First, we state a necessary lemma regarding the determinant of Kronecker powers, which is required for the DNA storage examples where the channel matrix $W$ is the $L$-th Kronecker power of a single-nucleotide channel $w$.

\begin{lemma}
\label{lem:det_tensor_general}
Let $A$ be an $a \times a$ matrix. For any integer $L \geq 1$, the determinant of the Kronecker power $A^{\otimes L}$ is given by:
\begin{align}
    \det(A^{\otimes L}) = (\det A)^{L \cdot a^{L-1}}.
\end{align}
Similarly, if $A$ is $a \times b$, then $\det((A^{\otimes L})(A^{\otimes L})^\top) = (\det(AA^\top))^{L \cdot a^{L-1}}$.
\end{lemma}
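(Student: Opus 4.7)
The plan is to prove this lemma by induction on $L$, leveraging the classical Kronecker product determinant identity: for square matrices $A\in\mathbb{R}^{a\times a}$ and $B\in\mathbb{R}^{b\times b}$, one has $\det(A\otimes B)=(\det A)^{b}(\det B)^{a}$. This identity is standard and follows from the Schur decomposition, or alternatively by expressing $A\otimes B=(A\otimes I_b)(I_a\otimes B)$ and noting that $\det(A\otimes I_b)=(\det A)^b$ and $\det(I_a\otimes B)=(\det B)^a$ by block-diagonal structure.

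For the first statement, I would set up induction on $L$. The base case $L=1$ is trivial since $L\cdot a^{L-1}=1$. For the inductive step, write $A^{\otimes(L+1)} = A\otimes A^{\otimes L}$, where the second factor is an $a^L\times a^L$ matrix. Applying the Kronecker determinant identity then gives
\begin{equation}
\det(A^{\otimes(L+1)}) = (\det A)^{a^L}\cdot\bigl(\det(A^{\otimes L})\bigr)^{a}.
\end{equation}
Substituting the inductive hypothesis $\det(A^{\otimes L})=(\det A)^{L\cdot a^{L-1}}$ yields an exponent of $a^L + a\cdot L\cdot a^{L-1} = (L+1)\,a^{L}$, which matches the claimed formula.

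For the second statement, the key observation is that transposition commutes with the Kronecker power: $(A^{\otimes L})^\top = (A^\top)^{\otimes L}$. Combining this with the mixed-product property $(X\otimes Y)(Z\otimes W)=(XZ)\otimes(YW)$ applied iteratively (or by a short induction on $L$) yields
\begin{equation}
(A^{\otimes L})(A^{\otimes L})^\top = (A^{\otimes L})(A^\top)^{\otimes L} = (AA^\top)^{\otimes L}.
\end{equation}
Since $AA^\top$ is a square $a\times a$ matrix, the first statement applies directly to give $\det((AA^\top)^{\otimes L})=(\det(AA^\top))^{L\cdot a^{L-1}}$, completing the proof.

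I do not anticipate any real obstacles here; this is a well-known algebraic identity. The only minor subtlety worth being careful about is the bookkeeping of dimensions in the inductive step (ensuring the exponents $a^L$ and $a$ are paired with the correct factors), and the observation that $(A^{\otimes L})^\top = (A^\top)^{\otimes L}$, which follows immediately from the definition of the Kronecker product entry-by-entry.
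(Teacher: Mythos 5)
Your proof is correct and follows essentially the same route as the paper's: the standard identity $\det(B\otimes C)=(\det B)^{\dim C}(\det C)^{\dim B}$ combined with induction on $L$, and the mixed-product property to reduce the rectangular case to the square one. Your write-up is in fact more detailed than the paper's one-line sketch, and the exponent bookkeeping $a^L + a\cdot L\cdot a^{L-1}=(L+1)a^L$ checks out.
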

\begin{proof}
This follows from the property $\det(B \otimes C) = (\det B)^{\dim(C)} (\det C)^{\dim(B)}$. By induction, for $W = w^{\otimes L}$, the eigenvalues are products of the eigenvalues of $w$, leading to the stated result.
\end{proof}

\subsection{Analysis of Example \ref{exa:general_substitution}: General Symmetric Substitution}
Consider the frequency-based channel with $n$ types. The transition matrix $W^{(n)} \in \mathbb{R}^{n \times n}$ is:
\begin{align}
    W^{(n)} = \left(1 - p - \frac{p}{n-1}\right)I_n + \frac{p}{n-1}J_n,
\end{align}
where $J_n$ is the all-ones matrix.

\subsubsection{Verification of Definition \ref{def:well_conditioned}}
We verify the requirements for $W^{(n)}$:
\begin{itemize}
    \item Non-negative and Row Stochastic: By definition of the channel model, entries are probabilities and rows sum to 1.
    \item Condition Number ($\tau_n$): The diagonal entries are $1-p$ and off-diagonal are $\frac{p}{n-1}$. Assuming $p < 1-p$, the condition number is $\kappa(W) = \frac{1-p}{p/(n-1)}$. We define $\tau_n = \frac{p}{(n-1)(1-p)}$. Thus:
    \begin{align}
        -\log \tau_n = \log(n-1) + \log\frac{1-p}{p} \approx \log n.
    \end{align}
    Since $\log n = o(\sqrt{n})$, the condition holds.
    \item Column Sums ($C_{\text{max}}$ and $\eta$): Since $W^{(n)}$ is symmetric and row-stochastic, it is doubly stochastic. Thus, every column sum is exactly $1$.
    \begin{align}
        \sum_{i=1}^n W_{ij} = \sum_{j=1}^n W_{ij} = 1 \implies C_{\text{max}} = 1, \quad \eta = 0.
    \end{align}
    Since $\eta = 0$ satisfies the condition, all values for $\eta$ will suffice.
\end{itemize}
Conclusion: The matrix $W^{(n)}$ satisfies the well-conditioned requirements (as in definition \ref{def:well_conditioned}), specifically satisfying the condition number scaling and column sum bounds with $C_{\text{max}}=1$.

\subsubsection{Derivation of Capacity Penalty}
The eigenvalues of $aI + bJ$ are $a+nb$ and $a$. Here $a = 1 - p - \frac{p}{n-1}$ and $b = \frac{p}{n-1}$.
\begin{enumerate}
    \item $\lambda_1 = a + nb = 1$.
    \item $\lambda_{2,\dots,n} = a = 1 - \frac{np}{n-1}$.
\end{enumerate}
The penalty term is:
\begin{align}
    \frac{1}{2n}\log\det(W W^\top) &= \frac{1}{n}\log|\det W| \\
    &= \frac{n-1}{n} \log \left( 1 - \frac{np}{n-1} \right).
\end{align}

\subsection{Analysis of DNA Erasure Channel}
Let the alphabet size be $|\mathcal{A}|$. The single-nucleotide matrix $w \in \mathbb{R}^{|\mathcal{A}| \times (|\mathcal{A}|+1)}$ is:
\begin{align}
    w = \left[ (1-\epsilon)I_{|\mathcal{A}|} \;\middle|\; \underline{\epsilon} \right].
\end{align}
The channel matrix is $W = w^{\otimes L}$ with input dimension $n=|\mathcal{A}|^L$.

\subsubsection{Verification of Definition \ref{def:well_conditioned}}
We verify the requirements for $W$:
\begin{itemize}
    \item Non-negative and Row Stochastic: $w$ is stochastic, hence $w^{\otimes L}$ is stochastic.
    \item Condition Number ($\tau_n$): Let $w_{\max} = \max(1-\epsilon, \epsilon)$ and $w_{\min} = \min(1-\epsilon, \epsilon)$. The entries of $W$ are products of length $L$. Thus $\max_{i,j} W_{ij} = (w_{\max})^L$ and $\min_{i,j: W_{ij}>0} W_{ij} = (w_{\min})^L$.
    We define $\tau_n = 1$.
    \begin{align}
        \log \tau_n = 0.
    \end{align}
    Thus $-\log \tau_n = o(\sqrt{n})$ is satisfied.
    \item Column Sums ($C_{\text{max}}$ and $\eta$):
    The column sums of the single-nucleotide matrix $w$ are $1-\epsilon$ for the columns corresponding to the standard alphabet $\mathcal{A}$, and $\sum_{i \in \mathcal{A}} \epsilon = |\mathcal{A}|\epsilon$ for the erasure column. 
    Since the channel matrix is $W = w^{\otimes L}$, the column sum for a specific output sequence containing exactly $k$ erasures is the product of the column sums of its components:
    \begin{align}
        C(k) = (|\mathcal{A}|\epsilon)^k (1-\epsilon)^{L-k}.
    \end{align}
    For the specific case of the "all-erasure" sequence ($k=L$), this yields a column sum of $(|\mathcal{A}|\epsilon)^L = |\mathcal{A}|^L \epsilon^L = n\epsilon^L$.
    Assuming $\epsilon \le \frac{1}{|\mathcal{A}|}$ (which ensures $|\mathcal{A}|\epsilon \le 1$), the column sums are always bounded by 1. 
    The maximum column sum is $C_{\text{max}} = (\max(1-\epsilon, |\mathcal{A}|\epsilon))^L$.
    The minimum column sum is $C_{\min} = (\min(1-\epsilon, |\mathcal{A}|\epsilon))^L$. 
    We can write $C_{\min} = n^{-\eta}$ where $\eta = -\log_{|\mathcal{A}|}(\min(1-\epsilon, |\mathcal{A}|\epsilon)) > 0$.
\end{itemize}
Conclusion: The matrix $W$ satisfies the well-conditioned requirements (as in definition \ref{def:well_conditioned}), specifically satisfying $-\log \tau_n = o(\sqrt{n})$ and the column sum bounds $n^{-\eta} \leq \sum_i W_{ij} \leq C_{\text{max}}$.

\subsubsection{Derivation of Capacity Penalty}
We compute $\det(WW^\top)$ via Lemma \ref{lem:det_tensor_general}.
\begin{align}
    ww^\top &= (1-\epsilon)^2 I_{|\mathcal{A}|} + \epsilon^2 \mathbf{1}_{|\mathcal{A}|} \mathbf{1}_{|\mathcal{A}|}^\top.
\end{align}
The eigenvalues are $\lambda_1 = (1-\epsilon)^2 + |\mathcal{A}|\epsilon^2$ and $\lambda_{2,\dots,|\mathcal{A}|} = (1-\epsilon)^2$.
\begin{align}
    \det(ww^\top) = \left((1-\epsilon)^2 + |\mathcal{A}|\epsilon^2\right) (1-\epsilon)^{2(|\mathcal{A}|-1)}.
\end{align}
Using Lemma \ref{lem:det_tensor_general} with input dimension $n = |\mathcal{A}|^L$:
\begin{align}
    \frac{1}{2n}\log\det(WW^\top) &= \frac{1}{2 |\mathcal{A}|^L} \cdot L |\mathcal{A}|^{L-1} \log \det(ww^\top) \\
    &= \frac{L}{2|\mathcal{A}|} \left[ \log\left((1-\epsilon)^2 + |\mathcal{A}|\epsilon^2\right) + 2(|\mathcal{A}|-1)\log(1-\epsilon) \right].
\end{align}

\subsection{Analysis of DNA Substitution Channel}
The single-nucleotide matrix $w \in \mathbb{R}^{|\mathcal{A}| \times |\mathcal{A}|}$ for symmetric substitution error $p$ is:
\begin{align}
    w = \left(1 - p - \frac{p}{|\mathcal{A}|-1}\right)I_{|\mathcal{A}|} + \frac{p}{|\mathcal{A}|-1}J_{|\mathcal{A}|}.
\end{align}
The channel matrix is $W = w^{\otimes L}$.

\subsubsection{Verification of Definition \ref{def:well_conditioned}}
We verify the requirements for $W$:
\begin{itemize}
    \item Non-negative and Row Stochastic: Inherited from $w$.
    \item Condition Number ($\tau_n$): Similar to the general substitution case, let $w_{\max}$ and $w_{\min}$ be the diagonal and off-diagonal entries of $w$.
    Then $\tau_n = (w_{\min}/w_{\max})^L$.
    \begin{align}
        -\log \tau_n \propto L \propto \log n,
    \end{align}
    satisfying the $o(\sqrt{n})$ requirement.
    \item Column Sums ($C_{\text{max}}$ and $\eta$): Since $w$ is doubly stochastic, $W$ is doubly stochastic.
    All column sums are exactly 1.
    \begin{align}
        C_{\text{max}} = 1, \quad \eta = 0.
    \end{align}
    Since $\eta = 0$ satisfies the condition, all values for $\eta$ will suffice.
\end{itemize}
Conclusion: The matrix $W$ satisfies the well-conditioned requirements (as in definition \ref{def:well_conditioned}), specifically satisfying the condition number scaling and column sum bounds with $C_{\text{max}}=1$.

\subsubsection{Derivation of Capacity Penalty}
The eigenvalues of $w$ are $1$ and $1 - \frac{|\mathcal{A}|p}{|\mathcal{A}|-1}$ (multiplicity $|\mathcal{A}|-1$).
\begin{align}
    \det(w) = \left( 1 - \frac{|\mathcal{A}|p}{|\mathcal{A}|-1} \right)^{|\mathcal{A}|-1}.
\end{align}
Using Lemma \ref{lem:det_tensor_general} for $W$:
\begin{align}
    \frac{1}{n}\log\det(W) &= \frac{1}{|\mathcal{A}|^L} \cdot L |\mathcal{A}|^{L-1} \log \det(w) \\
    &= \frac{L}{|\mathcal{A}|} (|\mathcal{A}|-1) \log \left( 1 - \frac{|\mathcal{A}|p}{|\mathcal{A}|-1} \right) \\
    &= L\left(1 - \frac{1}{|\mathcal{A}|}\right) \log \left( 1 - \frac{|\mathcal{A}|p}{|\mathcal{A}|-1} \right).
\end{align}

\end{document}